\documentclass[11pt]{article}
\usepackage[dvipsnames]{xcolor}
\usepackage[utf8]{inputenc}
\usepackage{fullpage}
\usepackage{float}
\usepackage{amsmath}
\usepackage{amsthm}
\usepackage{amssymb}
\usepackage{xcolor}
\usepackage{todonotes}
\usepackage{hyperref}
\usepackage[noabbrev,capitalize]{cleveref}
\usepackage[linesnumbered]{algorithm2e}
\Crefname{algocf}{Algorithm}{Algorithms}
\usepackage{bm}
\usepackage{graphicx}
\usepackage{enumitem}
\usepackage{subcaption}
\usepackage{tikz}
\usetikzlibrary{matrix,decorations.pathreplacing,calc,fit,backgrounds,fit}
\usepackage{complexity}
\usepackage{mathpazo,bbm}


\linespread{1.05}
\AtBeginDocument{
\DeclareSymbolFont{AMSb}{U}{msb}{m}{n}
    \DeclareSymbolFontAlphabet{\mathbb}{AMSb}}

\newcommand{\ignore}[1]{\relax}

\newlist{alphaenumerate}{enumerate}{1}
\setlist[alphaenumerate]{label={\alph*.},ref={\alph*}}

\newcommand\myshade{85}
\colorlet{mylinkcolor}{violet}
\colorlet{mycitecolor}{YellowOrange}
\colorlet{myurlcolor}{Aquamarine}

\hypersetup{
  linkcolor  = mylinkcolor!\myshade!black,
  citecolor  = mycitecolor!\myshade!black,
  urlcolor   = myurlcolor!\myshade!black,
  colorlinks = true,
}


\newtheorem{theorem}{Theorem}[section]

\newtheorem{conjecture}[theorem]{Conjecture}
\newtheorem{lemma}[theorem]{Lemma}

\newtheorem{definition}[theorem]{Definition}
\newtheorem{fact}[theorem]{Fact}

\newtheorem{proposition}[theorem]{Proposition}
\newtheorem{corollary}[theorem]{Corollary}
\newtheorem{observation}[theorem]{Observation}

\newcommand{\event}{E}

\newcommand{\set}[1]{\left\{#1\right\}}

\newcommand{\fix}{\textrm{fixed}}
\newcommand{\setX}{\mathcal{X}}

\newcommand{\setY}{\mathcal{Y}}

\newcommand{\setE}{\mathcal{E}}

\newcommand{\mintropy}{H_{\infty}}
\newcommand{\dinfy}{D_{\infty}}
\newcommand{\card}[1]{\left| #1 \right|}

\newcommand{\detp}{\ensuremath \Pi}

\newcommand{\indgadget}{\text{IND}}

\newcommand{\vecspan}{\textrm{span}}
\newcommand{\bits}{\set{0,1}}

\newcommand{\search}{\mathrm{Search}}
\newcommand{\height}{C}
\newcommand{\treesize}{\mathrm{tree}}
\newcommand{\size}{\mathrm{size}}
\newcommand{\dependent}{\ensuremath{D}}
\newcommand{\rowreduce}{\textbf{row-reduce}}
\newcommand{\codim}{\mathrm{codim}}
\newcommand{\OmegaTilde}{\Tilde{\Omega}}

\newcommand{\width}{\mathrm{width}}




\colorlet{pink}{red!40}
\colorlet{blue}{cyan!60}


\setlength{\parindent}{1em}
\setlength{\parskip}{0pt}
\setlength{\partopsep}{0pt}

\title{On Disperser/Lifting Properties of the Index and Inner-Product Functions\thanks{To appear at ITCS 2023}}
\author{Paul Beame\thanks{Research supported by the National Science Foundation under NSF grant CCF-2006359} \\ \small University of Washington \\ \small \href{mailto:beame@cs.washington.edu}{beame@cs.washington.edu}\and Sajin Koroth\thanks{Research supported by the Natural Sciences and Engineering Research Council of Canada (NSERC), Discovery Grant RGPIN-2022-05211} \\ \small University of Victoria\\ \small \href{mailto:skoroth@uvic.ca}{skoroth@uvic.ca}}
\date{\today}

\begin{document}

\maketitle

\begin{abstract}

\begin{sloppypar}
Query-to-communication lifting theorems, which connect the query complexity of a Boolean function to the communication complexity of an associated `lifted' function obtained by composing the function with many copies of another function known as a gadget, 
have been instrumental in resolving many open questions
in computational complexity.
A number of important complexity questions could be resolved if we could make substantial improvements in the input size required for lifting with the Index function, which is a universal gadget for lifting, from its current 
near-linear size down to polylogarithmic in the number of inputs $N$ of the original function or, ideally,
constant.
The near-linear size bound was recently shown by
Lovett, Meka, Mertz, Pitassi and Zhang~\cite{DBLP:conf/innovations/LovettMMPZ22} 
using a recent breakthrough improvement on the Sunflower Lemma to show that a
certain graph associated with an Index function of that size is a disperser.
They also stated a conjecture about the Index function that is essential for further improvements in the size required for
lifting with Index using current techniques.
In this paper we prove the following;
\end{sloppypar}
\begin{itemize}
    \item The conjecture of Lovett et al.\ is false
    when the size of the Index gadget is less than logarithmic in $N$.
    \item The same limitation applies to the Inner-Product function. 
    More precisely, the Inner-Product function, which is known
    to satisfy the disperser property at size
    $O(\log N)$, also does not have this property when
    its size is less than $\log N$.
    \item Notwithstanding the above, we prove a
    lifting theorem that applies to Index gadgets of
    any size at least 4 and yields lower bounds for a restricted class
    of communication protocols in which one of the players is limited to sending parities of its inputs.
    \item Using a modification of the same idea with improved lifting parameters we derive a strong lifting theorem from decision tree size to parity decision tree size.
    We use this, in turn, to derive a general lifting
    theorem in proof complexity from tree-resolution size to tree-like $Res(\oplus)$ refutation size, which yields many new exponential lower
    bounds on such proofs.
\end{itemize}
\end{abstract}

\thispagestyle{empty}
\newpage
\setcounter{page}{1}


\section{Introduction}

In recent years, a substantial number of long-standing problems~\cite{DBLP:journals/toc/GargGK020,DBLP:journals/siamcomp/Goos0018,LRS15conf,DBLP:journals/siamcomp/GoosP18,RPRC16conf}
have been resolved using the method of \emph{lifting}.   
Lifting results take a gadget function $g$ and show
that any function $f:\bits^N\rightarrow\bits$ that is 
hard to compute by decision trees can be modified to a new
function $F=f\circ g^N$ that is hard for a more powerful
computational model, typically that of 2-party
communication complexity, in which case
$g:X\times Y\rightarrow\bits$ and the inputs for $F$ for the
two players are
partitioned into $\mathbf{x}\in X^N$ and $\mathbf{y}\in Y^N$.

A particularly natural and important choice of gadget $g$ is the \emph{Index} gadget $\indgadget_m:[m]\times \bits^m$ given by 
$\indgadget_m(x,y)=y_x$.
The Index gadget is universal for all gadgets 
$g:X\times Y\rightarrow\bits$ where $|X|=|Y|=m$ via the simple
reduction where $y$ is replaced by the string $(g(x,y))_{x\in X}$.
Since $\indgadget_m$ has a 2-party protocol of cost $\log_2 m +1$,
the communication complexity of $f\circ \indgadget_m^N$ is at
most $O(C^{dt}(f)\log m)$, where $C^{dt}(f)$ is the decision
tree complexity of $f$.   Another important gadget $g$ is the Inner-Product function $\IP_b:\set{0,1}^b\times\set{0,1}^b\rightarrow\set{0,1}$ given by $\IP_b(x,y)=x\cdot y\bmod 2$.

An important limitation on the quality of lower bounds that can be
proven by 
lifting with a gadget 
$g:X\times Y\rightarrow \set{0,1}$ comes from the fact that the input size for $F$ 
grows by a factor of $\log_2 |X|+\log_2 |Y|$ bits from that
of $f$. 
This limits the lower bounds on the lifted function $F$
compared to the input size of $F$.   
The original lifting theorems of \cite{RazM97conf} and \cite{DBLP:journals/siamcomp/GoosP018} used Index gadgets with $m$ a large polynomial in $N$.  
Subsequently, \cite{CKLM19journ, WuYY17eccc} proved a lifting theorem for Inner-Product gadgets with $b=c\log_2 N$ for some constant $c>5$. Later,~\cite{CKLM19journ} improved $c$ to almost 2.
The first lifting theorem for randomized computation was proved by~\cite{DBLP:journals/siamcomp/GoosPW20} again for Index but for $m$ an even larger polynomial in $N$ and later again, by \cite{DBLP:journals/siamcomp/ChattopadhyayFK21}, for Inner-Product for $b$ a larger constant multiple of $\log_2 N$ than for deterministic lifting.

A key question asked in a precursor paper to these lifting theorems~\cite{DBLP:journals/siamcomp/GoosLMWZ16} is whether lifting is possible with a sub-logarithmic or even constant-size Inner-Product gadget.
Smaller gadgets imply sharper lifting results and
more general classes of functions for which lifting
may be used to prove lower bounds.  
Proving such lifting theorems would imply breakthrough results in other areas. 
For example, proving lifting theorems with constant-size gadgets would give us a near-complete understanding of communication complexity of lifted search problems and would imply breakthrough results in associated areas like proof complexity and circuit complexity\footnote{In this paper we focus on the setting of query-to-communication lifting where the query complexity of $f$ is lifted to the communication complexity of $F$ using the gadget $g$. 
There are other lifting theorems (see~\cite{Sherstov11journ}) which lift analytical parameters of the function $f$ to the communication complexity of $F$. 
In these settings, lifting theorems with constant-size gadgets are known~\cite{Sherstov11journ} 
but for many interesting applications of lifting, there is a significant gap between analytical parameters of $f$ like approximate-degree (used in~\cite{Sherstov11journ}) and the query complexity of $f$. 
Thus, such lifting theorems with constant-size gadgets are not enough to give the results alluded to above.}.
Even improving the gadget size for Index to poly-logarithmic in $N$ would improve the best known monotone circuit size lower bounds~\cite{HR00conf} from $2^{\OmegaTilde(n^{1/3})}$ to $2^{\OmegaTilde(n)}$.
In the dream range of constant size, by the universality of Index, if there is any lifting theorem for any constant-size gadget,
there would be one for constant-size Index
gadgets.

Recent work by Lovett, Meka, Mertz, Pitassi and Zhang~\cite{DBLP:conf/innovations/LovettMMPZ22} used a new bound for the Sunflower Lemma~\cite{ALWZ:sunflower-journal} to improve the size of the Index gadget
that can
be used in deterministic lifting results to $O(N\log N)$.
They also identified a conjecture
regarding entropy deficiency and the disperser property of the Index gadget that is essential for further reductions in gadget size using current techniques.

Before stating the conjecture of Lovett, Meka, Mertz, Pitassi and Zhang~\cite{DBLP:conf/innovations/LovettMMPZ22} we give an
outline of the meta-technique for proving query to communication lifting theorems, known as the simulation theorem framework.

\subsection*{The lifting paradigm}   
The general paradigm for proving a query-to-communication lifting theorem is a step-by-step simulation argument that begins with a
communication protocol $\detp$ for $f\circ g^N$ on inputs in $X^N\times Y^N$ and derives a decision tree $T$ computing $f$ on inputs $z\in \set{0,1}^N$.  

Beginning at the root of $\detp$ and with $T$ a single root node, the simulation proceeds
to follow a path in $\detp$ maintaining sets of
inputs $\setX\subseteq X^N$ and $\setY\subseteq Y^N$ 
consistent with the current  node $u$ in protocol $\detp$.  (The exact procedure for choosing the path and  the sets $\setX$ and $\setY$ varies.)

At any point in time when $\setX\times \setY$ has revealed too much about the value of $z_i=g(x_i,y_i)$
for some $i$, the simulation at the current leaf 
node $v$ of $T$ queries $z_i$ and adds the children
$v'$ and $v''$ to $T$, one for each outcome.
The simulation then splits into cases depending on whether the $0$ or $1$ out-edge from $v$ is being followed.   
There may be multiple $i$ for which this may need
to be done at the same time.

The simulations maintain several invariants
at each corresponding pair of nodes $u$ in $\detp$ and $v$ in $T$ that occur in this simulation.
In particular, if  $\setX$ and $\setY$ are associated with this pair of nodes and
$I\subseteq [N]$ is the set
of input indices queried on the path in $T$ to $v$
and $z_I$ is the assignment that takes $T$ to the node $v$
then we require
\begin{align*}
    g^I(\setX,\setY)&=z_I&\mbox{(Consistency)}\\
    g^{[N]\setminus I}(\setX,\setY)&=\set{0,1}^{[N]\setminus I}.\qquad&\mbox{(Disperser/Extensibility)}
    \end{align*}
The consistency property is obviously required for
correctness.
The disperser property is required
because the simulation cannot
predict what query indices will be needed for $T$ 
in the future.
Overall, in order to yield a good complexity bound,
the argument also has to bound the length of the path to $v$ in
$T$, which is the size of the set $I$, as a function
of the length of the path from the root to $u$ in $\detp$.

In order to maintain these properties, the simulations also maintain some ``nice'' structure 
on  the sets $\setX$ and $\setY$. 
The most common notions of nice structure are
small \emph{entropy deficiency} of
the induced distributions on the unqueried coordinates $[N]\setminus I$ or high \emph{min-entropy rate} (equivalently
the \emph{block min-entropy})\footnote{Many existing results for the Index gadget use bounds on the min-entropy rate on the $X^N$ side and entropy deficiency on the $Y^N$ side.}. 
The min-entropy rate is the minimum ratio of the min-entropy of the induced distributions on any subset of unqueried blocks compared to the maximum possible entropy on those blocks.

There are other "nice" properties that were used in the past. For example, one of the first lifting theorems by Raz and McKenzie~\cite{RazM97conf} used a combinatorial notion of niceness defined as average-degrees in a layered graph corresponding to $\setX$. This property was also used in the later reproving of the result by~\cite{DBLP:journals/siamcomp/GoosP018}, and the result on extending deterministic lifting theorems to a larger class of gadgets including Inner Product by~\cite{WuYY17eccc,CKLM19journ}\footnote{\cite{CKLM19journ} uses it slightly differently from the application of the property for the Index gadget.}. All of the results using this combinatorial property crucially depend on a transformation in layered graphs from average degree to minimum degree known as the ``thickness lemma'' to prove the disperser property of the gadget. It is a folklore result that such average-degree to min-degree transformations do not work for Index gadgets of linear size. Thus, using "average-degree" as the nice property cannot yield lifting theorems with sublinear-size Index gadgets using existing techniques.

\subsection*{The LMMPZ conjecture on entropy deficiency and disperser properties of $\indgadget$}

The conjecture of Lovett et al.~\cite{DBLP:conf/innovations/LovettMMPZ22} is a necessary condition for small entropy deficiency to be sufficient for lifting with the Index function.
To motivate the parameters of the conjecture we first note how entropy deficiency relates to the numbers of bits of communication sent in the protocol $\detp$:

In the course of following a path in  $\detp$ to a node $u$, each bit communicated may 
split the set of consistent inputs in either $X^N$
or $Y^N$ by a factor of 2, which increases the
entropy deficiency by 1.   
If good min-entropy rate is also required,
additional pruning must be done, which further increases the entropy deficiency.
Therefore, the best one can do in terms of maintaining small entropy deficiency is to maintain a bound $\Delta$ on
entropy deficiency for $\setX$ and $\setY$ that is proportional to
the number of bits sent in $\detp$.
Bounding the length of the path in the decision in terms of
the number of bits sent means that $|I|$ should not be too large as a function of $\Delta$.
This
led Lovett et al.\ to formulate the following 
conjecture on the disperser properties of
Index as a first step towards obtaining
lifting theorems for small gadget sizes:
\begin{conjecture}[{\cite[Conjecture 11]{DBLP:conf/innovations/LovettMMPZ22}}]
\label{conj:TonisConj}
There exists $c$, such that for all large enough $m$ the following holds: Let $\setX,\setY$ be distributions on $[m]^N$, ${(\bits^m)}^{N}$, respectively, each with entropy deficiency at most $\Delta$. Then $\indgadget_m^N(\setX,\setY)$ contains a sub-cube of co-dimension at most $c \Delta$.
That is, there exists $I\subseteq [N]$, $|I|\leq c\Delta$, and $\gamma \in \bits^I$ such that for all $z\in \bits^N$ with $z_I=\gamma$ we have
{ \setboolean{@fleqn}{false}
\begin{equation}
\Pr_{x\sim\setX,\ y\sim\setY} [\indgadget_m^N(x,y)=z] > 0.\label{cond:conjecture}\\[-1ex]
\end{equation}
}
\end{conjecture}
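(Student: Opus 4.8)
The final statement is a conjecture, and the plan is not to prove it but to \emph{refute} it: I claim it fails whenever the gadget size $m$ is below $\log_2 N$ (for larger gadgets the analogous statement may well hold --- it is known for $\IP$ at size $O(\log N)$, and for $\indgadget$ at near-linear size). The refutation will exhibit explicit distributions of entropy deficiency $O(1)$ whose Index image provably contains no subcube of small codimension. The target image is an \emph{anti-Hamming-ball} $\mathcal{B}_t := \set{ z \in \bits^N : z \text{ has at least } t \text{ zero coordinates} }$: every subcube of codimension $k$ contains a point with at most $k$ zero coordinates (fix the $k$ coordinates and set the rest to $1$), so $\mathcal{B}_t$ contains \emph{no} subcube of codimension smaller than $t$. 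It therefore suffices to realize $\mathcal{B}_t$ as $\indgadget_m^N(\setX,\setY)$ with $\setX,\setY$ of entropy deficiency $O(1)$ and with $t$ super-constant (in fact $t \approx N/2^m$).

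The construction I would use: let $\setX$ be uniform on $[m]^N$ (entropy deficiency $0$) and let $\setY$ be uniform on the event $\mathcal{E}_t := \set{ y \in (\bits^m)^N : \card{\set{ i : y_i = 0^m }} \ge t }$. The mechanism is that a block $y_i = 0^m$ forces $\indgadget_m(x_i,y_i) = 0$ for every $x_i$, while any other block $y_i$ is non-constant, so $\indgadget_m(x_i,y_i)$ attains both bits as $x_i$ ranges over $[m]$; since $\setX$ is uniform, a routine check of both inclusions gives $\indgadget_m^N(\setX,\setY) = \mathcal{B}_t$ exactly. For the deficiency bound: under the uniform distribution on $(\bits^m)^N$ the quantity $\card{\set{ i : y_i = 0^m }}$ is distributed as $\mathrm{Bin}(N,2^{-m})$ with mean $\mu = N/2^m$, so taking $t = \lfloor \mu/2 \rfloor$ and applying a Chernoff bound yields $\Pr[\mathcal{E}_t] \ge 1 - e^{-\mu/8} \ge 1 - 1/e$ whenever $\mu \ge 8$; since the entropy deficiency of $\setY$ equals $-\log_2 \Pr[\mathcal{E}_t]$, it is then below $1$. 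This is exactly the step where sub-logarithmic $m$ is used: ``$y_i = 0^m$'' has the non-negligible probability $2^{-m} > 1/N$, so one can afford to condition on $t \approx N/2^{m+1}$ such blocks at $O(1)$ deficiency cost; when $m \ge \log_2 N$ the expected number of zero blocks is at most $1$, conditioning on even one already costs $\Omega(\log_2(2^m/N))$, and this whole family of constructions collapses.

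Combining the pieces: for $2 \le m \le \log_2 N - 3$ we obtain $\setX,\setY$ of entropy deficiency $\Delta < 1$ with $\indgadget_m^N(\setX,\setY) = \mathcal{B}_t$, $t = \lfloor N/2^{m+1} \rfloor$, and $\mathcal{B}_t$ contains no subcube of codimension $< t$. Hence for any constant $c$ appearing in the conjecture, once moreover $m \le \log_2 N - \log_2 c - O(1)$ we have $c\Delta < c \le t$, so the subcube of codimension at most $c\Delta$ promised by the conjecture does not exist. Since such $m$ can be taken arbitrarily large (with $N$ exponentially larger), no threshold ``$m$ large enough'' rescues the conjecture, and in fact the same construction refutes it throughout $m \le \log_2 N - \omega(1)$ even with $c$ allowed to grow slowly. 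The argument transfers word for word to $\IP_b$: a block $y_i = 0^b$ forces $\langle x_i, y_i \rangle = 0$, while for $y_i \neq 0$ the map $x_i \mapsto \langle x_i, y_i \rangle$ is balanced, so taking $\setY$ uniform on $\set{ y : \card{\set{ i : y_i = 0^b }} \ge t }$ again produces image $\mathcal{B}_t$; with $b \le \log_2 N - 3$ the deficiency bound goes through unchanged, matching --- and showing the tightness of --- the known fact that $\IP$ has the disperser property already at gadget size $O(\log N)$.

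The main thing to be careful about, rather than a deep obstacle, is the coupled constraint between the entropy-deficiency budget and the attainable $t$: one must verify that ``at least $t$ zero blocks'' is an $O(1)$-deficiency event, and that the image is \emph{exactly} $\mathcal{B}_t$ and not something strictly larger --- the latter uses crucially that $\setX$ is fully uniform, so that every non-zero block of $y$ is genuinely a free coordinate. The genuinely hard direction, which is \emph{not} part of this refutation and for which the known positive results are responsible, is to show that no cleverer construction pushes such counterexamples past gadget size $\Theta(\log N)$; this refutation should be read together with those results as pinning the threshold at $\log_2 N$.
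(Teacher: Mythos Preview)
Your proposal is correct and follows essentially the same approach as the paper: take $\setX$ uniform on $[m]^N$ and $\setY$ uniform on those $y$ with many constant blocks, so that every output of $\indgadget_m^N$ is forced to have many fixed coordinates and hence the image misses a subcube of every small codimension. The only cosmetic differences are that the paper uses all-$1$ blocks (missing the all-$0$ output) rather than all-$0$ blocks, bounds $\Pr[\mathcal{E}_t]$ via the median of the binomial rather than Chernoff (yielding $t\approx\mu$ instead of $\mu/2$), and additionally treats general $\Delta$ by taking a direct product of $\Delta$ independent copies of the $\Delta=1$ construction.
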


\paragraph*{Remark:}
Note that the condition \cref{cond:conjecture} is somewhat weaker than
the combination of the consistency and disperser
conditions in the above lifting paradigm.  The lifting paradigm would correspond to additionally requiring that the $(x,y)$ pair in \cref{cond:conjecture} come from some $\setX'\subseteq\setX$ and $\setY'\subseteq\setY$ such that $\indgadget_m^I(\setX',\setY')=\gamma$. 
Here, one could satisfy \cref{cond:conjecture} using pairs $(x,y)$ and $(x',y')$ with $\indgadget_m^I(x,y)=\indgadget_m^I(x',y')=\gamma$ but
$\indgadget_m^I(x,y')\ne\gamma$.
\vspace*{1ex}

The results in~\cite{DBLP:conf/innovations/LovettMMPZ22} prove the conjecture for $m=O(N \log N)$,
in fact the stronger version with separate consistency and disperser properties required for lifting; previously it was only known when $m \gg N^2$. 
Based on a related statement about $p$-biased $(\setX,\setY)$ proved in the Robust Sunflower Theorem from~\cite{ALWZ:sunflower-journal}, the authors~\cite{DBLP:conf/innovations/LovettMMPZ22} also suggest that it is hopeful to prove the conjecture when $m=\poly(\log N)$ using techniques from their work and~\cite{ALWZ:sunflower-journal}.

\subsection*{Our results}

We disprove the LMMPZ conjecture when $m$ is  
$\log_2 N-\omega(1)$, even when $\setX$ and $\setY$ are also assumed to have extremely high min-entropy rate.
In our counterexample the distribution for 
$\setX$ is uniform on $[m]^N$ and so has
full entropy and maximum possible min-entropy rate.
The distribution on $\setY$ is also uniform so we
view both $\setX$ and $\setY$ as subsets of
$[m]^N$ and $(\bits^m)^N$ respectively.

Though the parameter $\Delta$ governing the entropy  deficiency in the conjecture is universally quantified, the failure of the
conjecture occurs over a very wide range of values of $\Delta$.
In fact, when $m\le (1-\alpha)\log_2 N$, we prove
a much larger gap and show
that $|I|$ must be 
$\Omega(N^\alpha)$ independent of $\Delta$ for \cref{cond:conjecture} to hold.

\begin{theorem}
\label{thm:main}
For any $\Delta\ge 1$ and $m$ with $2^m\le N/(K\Delta)$
for $K\ge 1$ there is a set 
$\setY\subseteq (\bits^m)^N$ of entropy deficiency at most $\Delta$ and min-entropy rate at least
$1-1/m$ such that
for every $I\subseteq [N]$ with $\card{I} \leq (K-1) \Delta$,
the set $\indgadget_m^{N\setminus I}([m]^N,\setY)$ does not contain the all-0 string.
\end{theorem}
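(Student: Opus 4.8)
The plan is to define $\setY$ explicitly as the set of all $y\in(\bits^m)^N$ having enough all-ones blocks; the disperser failure is then immediate, and the two distributional requirements both follow from a single size estimate. Write $B(y)=\card{\set{i\in[N]:y_i=1^m}}$ for the number of coordinates $i$ at which the block $y_i$ equals the all-ones string $1^m$, and set $t=\lfloor(K-1)\Delta\rfloor+1$. Take
\[
\setY=\set{\,y\in(\bits^m)^N : B(y)\ge t\,}.
\]
If $y\in\setY$ and $I\subseteq[N]$ with $\card I\le(K-1)\Delta$, then (since $\card I$ is an integer) $\card I\le t-1<B(y)$, so some coordinate $i\in[N]\setminus I$ has $y_i=1^m$; for that coordinate $\indgadget_m(x_i,y_i)=y_{i,x_i}=1$ no matter what $x_i\in[m]$ is. Hence for every $x\in[m]^N$ and every $y\in\setY$ the string $(y_{j,x_j})_{j\in[N]\setminus I}$ has a $1$ in coordinate $i$, so it is not $0^{[N]\setminus I}$; thus $\indgadget_m^{N\setminus I}([m]^N,\setY)$ does not contain the all-$0$ string. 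This step needs no computation.

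The one substantive point is the lower bound $\card\setY\ge 2^{mN-1}$. Under the uniform distribution on $(\bits^m)^N$ the blocks are independent and each equals $1^m$ with probability $2^{-m}$, so $B\sim\mathrm{Bin}(N,2^{-m})$ with mean $\mu=N2^{-m}\ge K\Delta$ by the hypothesis $2^m\le N/(K\Delta)$. Since $\Delta\ge1$ we have $t\le(K-1)\Delta+1\le K\Delta\le\mu$, and as $t$ is an integer, $t\le\lfloor\mu\rfloor$. I would then invoke the classical fact that a median of $\mathrm{Bin}(N,p)$ lies in the interval $[\lfloor Np\rfloor,\lceil Np\rceil]$ to get $\Pr[B\ge t]\ge\Pr[B\ge\lfloor\mu\rfloor]\ge\tfrac12$, that is, $\card\setY\ge 2^{mN-1}$. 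This is the only place the hypotheses are used, and I regard the median-of-binomial statement as the main (elementary) external ingredient; if one prefers to avoid it, any crude estimate giving $\Pr[\mathrm{Bin}(N,2^{-m})<t]\le 1-2^{-\Delta}$ suffices, but the median version is cleanest and in fact yields entropy deficiency $1$ rather than $\Delta$.

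Both distributional properties now follow from $\card\setY\ge 2^{mN-1}$. The entropy deficiency of the uniform distribution on $\setY$ is $mN-\log_2\card\setY\le 1\le\Delta$. For the min-entropy rate, fix a nonempty $J\subseteq[N]$ and $a\in(\bits^m)^J$; there are exactly $2^{m(N-\card J)}$ strings $y$ with $y_J=a$, so
\[
\Pr_{y\sim\setY}[y_J=a]\ \le\ \frac{2^{m(N-\card J)}}{\card\setY}\ \le\ \frac{2^{m(N-\card J)}}{2^{mN-1}}\ =\ 2^{\,1-m\card J}\ \le\ 2^{-(m-1)\card J},
\]
the final inequality because $\card J\ge1$. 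Thus the marginal on every set of blocks has min-entropy rate at least $1-1/m$. Combining the three observations proves the theorem, with all the real content residing in the size estimate $\card\setY\ge 2^{mN-1}$, where the hypotheses $2^m\le N/(K\Delta)$ and $\Delta\ge1$ enter.
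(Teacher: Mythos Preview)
Your argument is correct and in fact slightly cleaner than the paper's. Both proofs use the same core idea---define $\setY$ as the strings with many all-ones blocks and invoke the median-of-binomial bound (\cref{prop:binomomial-median})---but the paper treats only the case $\Delta=1$ directly and then, for general $\Delta$, partitions $[N]$ into $\Delta$ groups of size $N'=N/\Delta$ and takes the product of $\Delta$ independent copies of the $\Delta=1$ construction, yielding deficiency exactly $\Delta$. You instead observe that with $2^m\le N/(K\Delta)$ the mean $N2^{-m}\ge K\Delta$ is already large enough that the single-threshold set $\{y:B(y)\ge t\}$ with $t=\lfloor(K-1)\Delta\rfloor+1$ captures half of $(\bits^m)^N$, so the product construction is unnecessary and you get deficiency $\le 1$ uniformly in $\Delta$. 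This is a genuine simplification; the only cost is that the paper's product form makes it a little more transparent how the parameters scale, but your version subsumes it.
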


Since $\setX=[m]^N$ has no deficiency (and min-entropy
rate 1) we immediately derive the following:

\begin{corollary}
\cref{conj:TonisConj} is false when
$m\leq  \log_2 (N/\Delta)-\omega(1)$.
Moreover, for all $\Delta$, when 
$m\le (1-\alpha)\log_2 N$, for any set
$I$,
$|I|$ must be 
$\Omega(N^\alpha)$ for \cref{cond:conjecture} to hold.
\end{corollary}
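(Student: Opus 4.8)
The plan is to derive the Corollary directly from \cref{thm:main} (that reduction is routine) and then to indicate how I would prove \cref{thm:main}, which is where the real content sits. Note first that $\setX:=[m]^N$ with the uniform distribution has entropy deficiency $0$ and min-entropy rate $1$, so any $\setY$ that \cref{thm:main} produces gives a pair $(\setX,\setY)$ meeting the hypotheses of \cref{conj:TonisConj} and of \cref{cond:conjecture}, the support of the product distribution being exactly $\indgadget_m^N([m]^N,\setY)$. For the first assertion I would fix a candidate constant $c$ for \cref{conj:TonisConj}, set $\Delta:=1$, and for arbitrarily large $m$ pick $N\ge (c+2)2^m$; then $2^m\le N/(K\Delta)$ with $K:=c+2$, which is exactly the regime $m\le\log_2(N/\Delta)-\omega(1)$ (i.e.\ $N/(\Delta 2^m)\to\infty$). \cref{thm:main} then yields $\setY$ of deficiency $\le\Delta$ for which $\indgadget_m^{[N]\setminus I}([m]^N,\setY)$ omits $0^{[N]\setminus I}$ whenever $|I|\le (K-1)\Delta=c+1$. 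A subcube of co-dimension $\le c$ has the form $\{z:z_I=\gamma\}$ with $|I|\le c\le c+1$, and it contains the point $z$ with $z_{[N]\setminus I}=0^{[N]\setminus I}$; if that subcube were inside $\indgadget_m^N([m]^N,\setY)$ then $0^{[N]\setminus I}$ would lie in $\indgadget_m^{[N]\setminus I}([m]^N,\setY)$, a contradiction. So no subcube of co-dimension $\le c$ appears, and since $c$ was arbitrary, \cref{conj:TonisConj} fails throughout the stated range. For the ``moreover'' statement I would instead fix $\Delta$ and use $2^m\le N^{1-\alpha}$ to take $K:=\lfloor N^\alpha/\Delta\rfloor$ (which is $\ge1$ and gives $2^m\le N/(K\Delta)$ for large $N$); the same reasoning shows \cref{cond:conjecture} can hold for an index set $I$ only if $|I|>(K-1)\Delta=\Omega(N^\alpha)$, taking $\Delta=o(N^\alpha)$ (in particular $\Delta$ constant).

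For \cref{thm:main} itself I would start from a reachability criterion: $0^{[N]\setminus I}\in\indgadget_m^{[N]\setminus I}([m]^N,\setY)$ iff some $y\in\setY$ admits an $x\in[m]^N$ with $(y_j)_{x_j}=0$ for all $j\notin I$, and since the $x$-coordinates are free such an $x$ exists iff $y_j$ is not the all-ones word $1^m$ for any $j\notin I$ --- equivalently, iff $S_y:=\{j\in[N]:y_j=1^m\}$ is contained in $I$. So it suffices to build $\setY$ in which every $y$ has $|S_y|>(K-1)\Delta$, with entropy deficiency $\le\Delta$ and min-entropy rate $\ge 1-1/m$. The construction I would use is $\setY:=\{y\in(\bits^m)^N:|S_y|\ge s\}$, uniform, with $s:=\lfloor (K-1)\Delta\rfloor+1$.

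Three things would then need checking, and only the second is more than bookkeeping. \emph{Blocking}: if $|I|\le (K-1)\Delta$ then $|I|\le\lfloor(K-1)\Delta\rfloor<s\le|S_y|$, so $S_y\not\subseteq I$ and the criterion applies. \emph{Deficiency}: grouping words by their number of all-ones coordinates, $|\setY|/2^{mN}=\Pr[\mathrm{Bin}(N,2^{-m})\ge s]$; since $s\le(K-1)\Delta+1\le K\Delta\le N2^{-m}$ (using $\Delta\ge1$ and the hypothesis $2^m\le N/(K\Delta)$) and $s$ is an integer, $s\le\lfloor N2^{-m}\rfloor$, and a median of $\mathrm{Bin}(N,2^{-m})$ is $\ge\lfloor N2^{-m}\rfloor$, so this probability is $\ge\tfrac12$, whence the deficiency is $\le1\le\Delta$. \emph{Min-entropy rate}: for $B\subseteq[N]$ and $v\in(\bits^m)^B$, $\Pr_{y\sim\setY}[y_B=v]\le 2^{m(N-|B|)}/|\setY|\le 2^{1-m|B|}\le 2^{-(m-1)|B|}$ for $|B|\ge1$ (trivial for $|B|=0$), so the min-entropy on $B$ is at least $(m-1)|B|=(1-1/m)\,m|B|$. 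I expect the main obstacle to be pinning down the deficiency cleanly: a single-term bound $|\setY|\ge\binom{N}{s}(2^m-1)^{N-s}$ is far too weak (an individual binomial weight even near the mean can be tiny, which blows up the deficiency when $K$ is huge), so one genuinely needs that the \emph{cumulative} lower tail of $\mathrm{Bin}(N,2^{-m})$ at a point below its mean is at least a constant --- handled above by the median-of-binomial fact, with a multiplicative Chernoff bound as an alternative when $\Delta$ is bounded away from $1$.
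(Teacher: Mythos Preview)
Your derivation of the Corollary from \cref{thm:main} is correct and matches the paper, which leaves this step as immediate from the observation that $\setX=[m]^N$ has zero deficiency and full min-entropy rate.

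For \cref{thm:main} itself your approach is correct but takes a slightly different route from the paper. The paper first treats the case $\Delta=1$ exactly as you do (threshold set $\{y:|S_y|\ge k\}$ with $k=\lfloor K\rfloor$, binomial-median bound for the deficiency), but then handles general $\Delta$ by a \emph{product construction}: it partitions $[N]$ into $\Delta$ groups of size $N'=N/\Delta$, applies the $\Delta=1$ construction independently on each group (legitimate since $2^m\le N'/K$), and takes the product distribution, giving deficiency at most $\Delta$ and $>(K-1)\Delta$ all-1 blocks in total. You instead run the single threshold construction directly on all $N$ coordinates with the larger threshold $s=\lfloor(K-1)\Delta\rfloor+1$, observing that $s\le K\Delta\le N2^{-m}$ so the same median argument still gives deficiency $\le 1\le\Delta$. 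Your version is a mild simplification: it avoids the product step entirely and even yields a slightly stronger counterexample (deficiency $\le 1$ rather than $\le\Delta$). The paper's product construction buys nothing extra for the theorem as stated, though it does make explicit that the deficiency bound can be saturated.
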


Furthermore, an analogous property applies to the Inner-Product function:

\begin{theorem}
\label{thm:inner-product}
For $b\le \log_2(N/(K\Delta))$ for $K\ge 1$, there is a set 
$\setY\subset (\bits^b)^N$ of entropy deficiency at most $\Delta$ and min-entropy rate more than
$1-1/b$ for every $I\subseteq [N]$ with $\card{I} \leq (K-1) \Delta$, the set
$\IP_b^{N\setminus I}((\bits^b)^N,\setY)$
does not contain the all-1 string.
\end{theorem}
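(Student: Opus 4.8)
The plan is to run the argument behind \cref{thm:main} verbatim, with the all-ones gadget input $1^m\in\bits^m$ for Index replaced by the all-zeros vector $0^b\in\bits^b$ for Inner-Product (throughout we assume $\Delta\ge 1$, as in \cref{thm:main}). The starting observation is that for a fixed $y_j\in\bits^b$ there is some $x_j\in\bits^b$ with $\IP_b(x_j,y_j)=1$ if and only if $y_j\neq 0^b$. Hence, for a fixed $I$ and a fixed $y$, the all-$1$ string on $[N]\setminus I$ lies in $\{(\IP_b(x_j,y_j))_{j\notin I}:x\in(\bits^b)^N\}$ exactly when $Z(y):=\{j\in[N]:y_j=0^b\}\subseteq I$. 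Consequently $\IP_b^{N\setminus I}((\bits^b)^N,\setY)$ avoids the all-$1$ string for \emph{every} $I$ with $\card{I}\le (K-1)\Delta$ precisely when every $y\in\setY$ has $\card{Z(y)}>(K-1)\Delta$. So it suffices to build $\setY\subseteq(\bits^b)^N$ with entropy deficiency at most $\Delta$, min-entropy rate more than $1-1/b$, and $\card{Z(y)}\ge t$ for all $y\in\setY$, where $t:=\lfloor(K-1)\Delta\rfloor+1$.

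I would take the obvious candidate: $\setY:=\{y\in(\bits^b)^N:\card{Z(y)}\ge t\}$ with the uniform distribution, so the zero-block property is free. Write $q:=2^{-b}$ and $\mu:=Nq$. The hypothesis $2^b\le N/(K\Delta)$ gives $\mu\ge K\Delta\ge t$, so (as $t$ is an integer) $t\le\lfloor\mu\rfloor$. Under the uniform distribution on $(\bits^b)^N$ the number of zero-blocks is distributed as $\mathrm{Bin}(N,q)$, whence $\card{\setY}/2^{bN}=\Pr[\mathrm{Bin}(N,q)\ge t]$; since the median of $\mathrm{Bin}(N,q)$ is at least $\lfloor Nq\rfloor\ge t$, this probability is at least $1/2$, so the entropy deficiency $bN-\log_2\card{\setY}=-\log_2\Pr[\mathrm{Bin}(N,q)\ge t]$ is at most $1\le\Delta$.

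The remaining work, and the only delicate part, is the min-entropy rate. Fix $\emptyset\neq S\subseteq[N]$ with $s:=\card{S}$. A short symmetry argument shows $\Pr_{y\sim\setY}[y_S=a]$ depends only on the number of $0^b$-blocks of $a$ and is maximized at $a=0^{bs}$, where it equals $2^{-bs}\cdot\Pr[\mathrm{Bin}(N-s,q)\ge\max(t-s,0)]\,/\,\Pr[\mathrm{Bin}(N,q)\ge t]$, so I need this to be strictly below $2^{-(b-1)s}$, i.e.\ the ratio of those two binomial tails to be strictly below $2^s$. For $s\ge 2$ this is immediate from the crude count $\Pr_{y\sim\setY}[y_S=a]\le 2^{b(N-s)}/\card{\setY}=2^{-bs}/\Pr[\mathrm{Bin}(N,q)\ge t]\le 2\cdot 2^{-bs}<2^s\cdot 2^{-bs}$. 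For $s=1$ the crude bound is off by exactly a factor $2$, so instead I would expand $\Pr[\mathrm{Bin}(N,q)\ge t]=\Pr[\mathrm{Bin}(N-1,q)\ge t-1]-(1-q)\Pr[\mathrm{Bin}(N-1,q)=t-1]$ and use that, because $t\le\mu$, the binomial mass function of $\mathrm{Bin}(N-1,q)$ is nondecreasing up to $t$, giving $\Pr[\mathrm{Bin}(N-1,q)=t-1]/\Pr[\mathrm{Bin}(N-1,q)\ge t-1]\le t/(t+\mu)$; substituting yields $\Pr_{y\sim\setY}[y_j=0^b]\le 2^{-b}(t+\mu)/(\mu+qt)<2^{1-b}$, the final inequality reducing to $t(1-2q)<\mu$, which holds since $q>0$ and $t\le\mu$. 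As $[N]$ has only finitely many subsets, this gives min-entropy rate strictly above $1-1/b$.

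The main obstacle is exactly that $s=1$ case: the naive counting bound is tight up to the one forbidden factor of $2$, so one has to squeeze the extra slack out of the monotonicity of the binomial pmf below its mean and make sure none of the inequalities collapse to equality --- which, satisfyingly, comes down to no more than $q=2^{-b}>0$ and $t\le N2^{-b}$. Everything else, including the reachability reduction and the deficiency bound, is a line-by-line transcription of the Index case in \cref{thm:main}.
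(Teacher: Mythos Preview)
Your argument is correct, but it diverges from the paper's route once $\Delta>1$. The paper does not analyze the single global threshold set $\{y:\card{Z(y)}\ge t\}$ for general $\Delta$; instead it reduces to the $\Delta=1$ case by partitioning $[N]$ into $\Delta$ blocks of size $N'=N/\Delta$ and taking the product of $\Delta$ independent copies of the $\Delta=1$ construction on those blocks. This gives deficiency at most $\Delta$ by additivity, more than $(K-1)\Delta$ special blocks by summing over the copies, and --- crucially --- the min-entropy rate bound for free, since a product over disjoint coordinates inherits the rate of each factor. So the paper never needs your $s=1$ analysis at all; the rate bound comes straight from the $\Delta=1$ deficiency-$\le 1$ computation. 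What you buy in exchange for the extra work is a single uniform construction with deficiency $\le 1$ for every $\Delta$, and an honest verification of the \emph{strict} inequality ``more than $1-1/b$'' as stated in \cref{thm:inner-product} (the paper's analogy with \cref{thm:main} only literally gives ``at least $1-1/b$'').

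One small caution on the $s=1$ step: the bound $\Pr[\mathrm{Bin}(N-1,q)=t-1]/\Pr[\mathrm{Bin}(N-1,q)\ge t-1]\le t/(t+\mu)$ is not what the monotonicity of the pmf up to $\lfloor\mu\rfloor$ immediately yields --- that argument gives the cruder $B/A\le 1/(\lfloor\mu\rfloor-t+2)\le 1/2$. Fortunately $B/A\le 1/2$ is already enough: plugging it into your identity gives $\Pr_{y\sim\setY}[y_j=0^b]\le q/(1-(1-q)/2)=2q/(1+q)<2q=2^{1-b}$, using only $q>0$. So the conclusion stands; you may want to replace the $t/(t+\mu)$ step by this simpler bound.
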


Therefore, though lifting theorems, both deterministic and randomized, have already been proven for
Inner-Product gadgets on $c\log_2 n$ bits using only properties of small entropy-deficiency and high min-entropy rate~\cite{DBLP:journals/siamcomp/ChattopadhyayFK21}, using these properties we can at best reduce the Inner-Product gadget
size in such
lifting theorems by at most a constant factor since lifting for Inner-Product gadgets with significantly fewer than $\log_2 n$ bits are impossible using those properties.

The proof idea for these theorems is quite simple and 
relies on the fact that for such small values of $m$,
it is likely that a uniformly random string $y$ will 
have many blocks $i$ where $y_i=0^m$ and hence
cannot have 1 output values in any of those coordinates.

Despite this setback, the dream of lifting theorems for constant-size gadgets remains.
Our second main result is that, though we can rule out the disperser properties of \cref{conj:TonisConj} for $\indgadget_m$ with sub-logarithmic $m$, 
there is an interesting class of protocols, one in
which Bob's messages are constrained to be parity functions of his input string $y$, in which we can prove a deterministic lifting theorem using 
$\indgadget_m$ gadget for \emph{constant size} $m$.

Since Alice is unrestricted and Bob is restricted, 
we call such protocols semi-structured protocols.    
We obtain a lifting theorem for semi-structured protocols showing that the decision tree height is asymptotically at most a $1/\log m$ fraction of the complexity of 
 the communication protocol for the lifted function with $\indgadget_m^N$.
As is typical for deterministic lifting theorems, this works
both for functions and for search problems.
For protocols in which both Alice and Bob only send parities of their inputs, we obtain an even stronger simulation that
applies to the size of the decision tree produced in terms of the number of leaves (size) of the communication protocol for the
lifted function.

In particular a modification of this idea gives us a generic theorem that lifts decision tree lower bounds of height $t$ 
or size $s$ 
for any explicit function $f$ on $n$ inputs to a corresponding lower bound for parity decision trees of height $\Omega(t)$
or size $\Omega(s)$ 
for an explicit function $f'$ on $O(n)$ inputs.

The latter also yields new lower bounds for tree-like proofs in the
$Res(\oplus)$ proof system  introduced by
Itsykson and Sokolov~\cite{DBLP:conf/mfcs/ItsyksonS14,DBLP:journals/apal/ItsyksonS20}
who
proved tight exponential lower bounds for the pigeonhole principle as well as exponential lower bounds for a very restricted kind
of lifted formula based on Tseitin formulas from~\cite{bps:kfoldtseitin-journal}.  
(This system is also known as $ResLin_2$ because of its relationship to the $ResLin$ proof system of
Raz and Tzameret~\cite{DBLP:journals/apal/RazT08}.)
Huynh and Nordstr\"{o}m~\cite{DBLP:conf/stoc/HuynhN12} gave lifting theorems for a variety of other proof systems using constant-size Index gadgets (indeed with $m=3$) but these only yield good bounds for a restricted class of formulas whose search
problems have high ``critical block sensitivity".
Here we obtain exponential lower bounds for a substantially broader class of formulas.   
In particular, for any of the vast class of $k$-CNF formulas $\varphi$ for which exponential tree-resolution lower bounds are known, we obtain lifted
$O(k)$-CNF formulas $\varphi'$ with a constant factor increase in number of variables (and a constant factor increase in number of clauses if $k$ is constant) 
requiring tree-like $Res(\oplus)$ refutations of 
exponential size (indeed at least the tree-like resolution refutation size for
$\varphi$).

\paragraph*{Related Work}
Independently of our work,
Chattopadhyay, Mande, Sanyal, and Sherif~\cite{ChattopadhyayMSS23-arxiv} have obtained closely related lifting results for parity decision tree size and the size of tree-like $Res(\oplus)$ proofs. 
Their lifting theorem works not only for lifting with Index gadgets but, more generally, for lifting with a class of gadgets that includes Inner-Product and other simple gadgets.
Their methods and ours have considerable
similarity, particularly in the use of
row-reduction as a key component.

\section{Preliminaries}

\paragraph*{Notation:} For a set of vectors (or a distribution $\setX$) on $U^N$ and $I\subseteq [N]$, we use $\setX_I$ to denote
the projection of $\setX$ onto the coordinates in $I$.
For a function $h$ on $U$, we let $h^I(\setX)$ 
denote the set of all possible vectors of outputs of $h^I$ on 
$\setX_I$; if this set is a singleton $w$ we abuse notation and simply define the value to be $w$.
\paragraph*{Information theory:} 
We use several definitions for forms of entropy:
\begin{definition}The \emph{entropy deficiency} (sometimes simply \emph{deficiency}) of a distribution $\mathcal{X}$ on 
a universe $U$, $\dinfy(\setX)$, is 
$\log_2 |U|-H_2(\mathcal{X})$. 
For a subset $\mathcal{V}\subseteq U$, the deficiency of $\mathcal{V}$ is that of the uniform
distribution on $\mathcal{V}$.
In particular, for example, the deficiency of a set of inputs $\setY \subseteq (\bits^m)^N$ of Bob satisfies
$$2^{-\dinfy(\setY)}=\frac{ \card{\setY} } { 2^{m \cdot N} }.$$
\end{definition}

\begin{definition}
The \emph{min-entropy} of a distribution 
$\mathcal{X}$ on $U$,
$H_\infty(\mathcal{X})$, is 
$$\min_{x\in U}\log_2(1/\Pr_{\mathcal{X}}(x)).$$
For a distribution $\setX$ on a set $U^N$, the 
\emph{min-entropy rate} of $\setX$ is the maximum $\tau$ such that for every $J\subseteq [N]$,
$H_\infty(\setX_J) \geq \tau |J|\log_2 |U|$ or, 
equivalently, such that for all $\alpha_J\in U^J$,
$$\Pr_{x\sim \setX}[x_J=\alpha_J]\le |U|^{-\tau\cdot |J|}.$$
\end{definition}

We use the following fundamental fact about the effect of conditioning on the min-entropy of a random variable.

\begin{fact}\label{fact:conditioningMinEntropy}
Let $\setX$ be a random variable and let $\event$ be an event. Then $\mintropy(\setX \mid \event) \geq \mintropy(\setX) - \log_2 (1/\Pr[\event])$.
\end{fact}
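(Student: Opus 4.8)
The plan is to prove this standard ``conditioning costs at most $\log_2(1/\Pr[\event])$ bits of min-entropy'' principle directly from the pointwise characterization of min-entropy: $\mintropy(\setX)$ is the largest $h$ such that $\Pr[\setX=x]\le 2^{-h}$ for every atom $x$, and symmetrically $\mintropy(\setX\mid\event)$ is governed by the single atom $x$ maximizing $\Pr[\setX=x\mid\event]$. Hence it suffices to give a uniform (over $x$) upper bound on the conditional probability $\Pr[\setX=x\mid\event]$ in terms of $2^{-\mintropy(\setX)}$ and $\Pr[\event]$, and then translate it back through $\log_2$.

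Concretely, I would fix an arbitrary atom $x$ in the support of $\setX$ and expand $\Pr[\setX=x\mid\event]=\Pr[\setX=x,\ \event]/\Pr[\event]$. Since $\{\setX=x\}\cap\event\subseteq\{\setX=x\}$, monotonicity of probability gives $\Pr[\setX=x,\ \event]\le\Pr[\setX=x]\le 2^{-\mintropy(\setX)}$, the last inequality being the defining property of $\mintropy(\setX)$. Dividing by $\Pr[\event]$ yields $\Pr[\setX=x\mid\event]\le 2^{-\mintropy(\setX)}/\Pr[\event]$; applying $\log_2$ to the reciprocals turns this into $\log_2(1/\Pr[\setX=x\mid\event])\ge\mintropy(\setX)-\log_2(1/\Pr[\event])$. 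Taking the minimum over all atoms $x$ gives $\mintropy(\setX\mid\event)\ge\mintropy(\setX)-\log_2(1/\Pr[\event])$, which is exactly the claim.

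There is essentially no substantive obstacle here; the only points worth a sentence of care are degenerate cases. If $\Pr[\event]=0$ the conditional distribution is undefined and the statement is read vacuously (or one simply assumes $\Pr[\event]>0$, as is implicit at every point where the fact is invoked). Likewise, atoms $x$ with $\Pr[\setX=x]=0$ contribute $+\infty$ to both minima and so never bind, so restricting attention to the support of $\setX$ keeps every quantity finite and the displayed chain of inequalities well defined.
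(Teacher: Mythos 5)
Your proof is correct: the pointwise bound $\Pr[\setX=x\mid\event]\le \Pr[\setX=x]/\Pr[\event]\le 2^{-\mintropy(\setX)}/\Pr[\event]$ followed by taking $\log_2$ and minimizing over atoms is exactly the standard argument for this fact, which the paper states without proof as a known result. Your handling of the degenerate cases ($\Pr[\event]=0$ and zero-probability atoms) is also appropriate.
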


\paragraph*{Probability:}
We will use the following nice bound on the median of
any binomial distribution.

\begin{proposition}[\cite{kb:binomial-median}]
\label{prop:binomomial-median}
The median of a binomial distribution $B(n,p)$ lies between
$\lfloor np\rfloor$ and $\lceil np\rceil$.
\end{proposition}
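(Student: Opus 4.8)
The plan is to work from the defining property of a median: an integer $m^{\star}$ is a median of $X\sim B(n,p)$ iff $\Pr[X\le m^{\star}]\ge \tfrac12$ and $\Pr[X\ge m^{\star}]\ge \tfrac12$, and since $\Pr[X\le m]$ is non-decreasing and $\Pr[X\ge m]$ non-increasing in $m$, the set of medians is a non-empty block of consecutive integers $[m_-,m_+]$. To see that this block meets $\{\lfloor np\rfloor,\dots,\lceil np\rceil\}$ it suffices to prove the two one-sided bounds
\[
\text{(i)}\quad \Pr[X\le \lceil np\rceil]\ge \tfrac12,
\qquad
\text{(ii)}\quad \Pr[X\ge \lfloor np\rfloor]\ge \tfrac12 ,
\]
since (i) says $m_-\le \lceil np\rceil$, (ii) says $\lfloor np\rfloor\le m_+$, and two intervals with these endpoint inequalities overlap. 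The cases $p\in\{0,1\}$ are immediate, and because $n-X\sim B(n,1-p)$ with $n-\lfloor np\rfloor=\lceil n(1-p)\rceil$, statement (ii) for $(n,p)$ is literally statement (i) for $(n,1-p)$; so it is enough to prove (i) for all $n\ge 1$, $p\in(0,1)$.

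Next I would convert (i) into a statement about a Beta median via the classical identity relating binomial tails to the regularized incomplete Beta function $I_x(a,b)$ (equivalently, to uniform order statistics): for $1\le j\le n$,
\[
\Pr[X\ge j]\;=\;I_p(j,\,n-j+1)\;=\;\Pr\!\big[\mathrm{Beta}(j,\,n-j+1)\le p\big].
\]
Taking $j=\lceil np\rceil+1$ (the degenerate cases $\lceil np\rceil\in\{0,n\}$ being trivial), (i) becomes $p\le \mathrm{median}\big(\mathrm{Beta}(a,b)\big)$ with $a=\lceil np\rceil+1$, $b=n-\lceil np\rceil$. Since $a+b-1=n$ and $a-1=\lceil np\rceil\ge np$, the bound $\mathrm{median}(\mathrm{Beta}(a,b))\ge (a-1)/(a+b-1)$ for integers $a,b\ge1$ yields $\mathrm{median}(\mathrm{Beta}(a,b))\ge \lceil np\rceil/n\ge p$, which is exactly (i).

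The entire difficulty is thus concentrated in this Beta-median bound. Running the incomplete-Beta identity in reverse, $I_x(a,b)=\Pr[B(a+b-1,x)\ge a]$, and evaluating at $x=(a-1)/(a+b-1)$, one sees that $\mathrm{median}(\mathrm{Beta}(a,b))\ge (a-1)/(a+b-1)$ is equivalent to $\Pr\big[B(m,\lambda/m)\le \lambda\big]\ge \tfrac12$ with $m=a+b-1$ and integer mean $\lambda=a-1$ --- i.e.\ to the fact that a binomial with integer mean puts at least half its mass at or below the mean. This ``integer-mean'' case is a sharp one-sided concentration bound at $O(1)$ distance from the mean, beyond the reach of Chernoff- or Chebyshev-type estimates and of the same flavour as Ramanujan's inequality $\sum_{k\le \lambda}\lambda^k/k!>e^{\lambda}/2$ for the Poisson limit; it is the crux, and the only step that is not a routine reduction. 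I would establish it following \cite{kb:binomial-median} --- e.g.\ by showing $\Pr[B(m,\lambda/m)\le\lambda]$ is monotone in $m$ with the correct Poisson limit, or by an explicit term-by-term pairing of $\Pr[X=\lambda-i]$ against $\Pr[X=\lambda+i]$ in which the far-tail remainder is absorbed by the slack accumulated near the mode, the choice $\lambda=$ mean being exactly what makes the aggregated comparison go through.
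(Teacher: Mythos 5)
First, a point of reference: the paper does not prove this proposition at all — it is imported as a black-box citation to the reference given in the statement, so there is no in-paper argument to measure yours against. Your chain of reductions is sound and is essentially the standard route to this result: the medians of an integer-valued distribution form a block of consecutive integers; the symmetry $n-X\sim B(n,1-p)$ with $n-\lfloor np\rfloor=\lceil n(1-p)\rceil$ correctly converts (ii) into (i); the identity $\Pr[X\ge j]=I_p(j,n-j+1)=\Pr[U_{(j)}\le p]$ is the right bridge; and the bookkeeping with $a=\lceil np\rceil+1$, $b=n-\lceil np\rceil$, $a+b-1=n$ checks out. One small caveat: proving (i) and (ii) with non-strict inequalities shows only that the median interval \emph{meets} $\{\lfloor np\rfloor,\dots,\lceil np\rceil\}$, not that every median lies there; the weaker reading is exactly what \cref{all-1-tail} consumes (namely $\Pr[X\ge\lfloor np\rfloor]\ge\tfrac12$), so this is harmless here.

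The genuine gap is that the argument never closes. After a sequence of correct but essentially content-free reformulations you arrive at the statement that $\Pr[B(m,\lambda/m)\le\lambda]\ge\tfrac12$ when the mean $\lambda$ is an integer, correctly flag it as the crux, and then defer it to ``following'' the cited reference, offering two unexecuted strategies. Observe that this crux is literally the special case of your own statement (i) in which $np$ is an integer, so the reduction, while non-circular, discharges none of the difficulty of the theorem. Of the two strategies you sketch, the Poisson route requires both Ramanujan's inequality $e^{-\lambda}\sum_{k\le\lambda}\lambda^k/k!>e^{\lambda}/2$ and a monotonicity-in-$m$ claim for $\Pr[B(m,\lambda/m)\le\lambda]$ (an Anderson--Samuels-type statement), neither of which is routine or proved here; the term-pairing route is stated only as an intention, and the absorption of the far tail by the slack near the mode is precisely where all the work lies. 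As a self-contained proof the proposal is therefore incomplete; as a correct reduction of the proposition to its integer-mean core, it is a reasonable skeleton, but so far it proves no more than the citation the paper already relies on.
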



\paragraph*{Linear algebra:}
We use row-reduced form to represent matrices in our simulation theorems.
\begin{definition}[Row-reduced matrices]
\label{def:row-reduced}
    A matrix $M$ with $r$ rows is said to be \emph{row reduced} if it contains an $r\times r$ identity
    submatrix.
\end{definition}




\paragraph*{Parity decision trees
and $Res(\oplus)$ refutations:}
A \emph{parity decision tree} over a set of Boolean variables
$Z$ defining a space $\set{0,1}^Z$ of Boolean vectors is a rooted binary tree in which each internal node is labeled by a parity of variables from $Z$ with out-edges labeled 0 and 1 respectively.
Each leaf is labeled by an output value.  Such a decision tree computes a function on $\set{0,1}^Z$.

For a function or relation $f$ on Boolean inputs, let $\height^{dt}(f)$ and $\size^{dt}(f)$ be the
minimal height and size, respectively of any decision tree computing $f$ and let $\height^{\oplus dt}(f)$ and
$\size^{\oplus(dt)}(f)$ be the corresponding
measures for parity decision trees.

Before defining the proof system $Res(\oplus)$ and its relationship to parity decision trees, we first review the resolution proof system and its relationship to ordinary decision trees.

A \emph{resolution} ($Res$) refutation of an unsatisfiable CNF formula $\varphi$ on variables $Z$
is a sequence of clauses ending in the empty clause
$\bot$ in which each clause is either
a clause of $\varphi$, or follows from two prior
clauses using the inference rule
$$\frac{A\lor z_i,\ B\lor \overline {z_i}}{A\lor B}$$
for some variable $z_i\in Z$; we say that this step
\emph{resolves} on variable $z_i$.

This sequence yields a
directed acyclic graph (dag) of in-degree 2 each of whose nodes is labeled by a clause on the variables in $Z$, with sources labeled by clauses of $\varphi$ and sink
labeled by $\bot$.
The resolution refutation is \emph{tree-like} if this
associated dag is a tree and the associated refutation is
called a \emph{tree-resolution} refutation of $\varphi$.
For an unsatisfiable CNF formula $\varphi$, let $\treesize_{Res}(\varphi)$ be the minimum size of
any tree-resolution refutation of $\varphi$.
Further, let $\width_{Res}(\varphi)$ denote the minimum over all
resolution refutations of $\varphi$ of the length of the longest clause
in the refutation.
We have the following result of Ben-Sasson and Wigderson~\cite{DBLP:journals/jacm/Ben-SassonW01}.

\begin{proposition}
\label{prop:bsw}
For any CNF formula $\varphi$ with clause-size at most $k$,
$\treesize_{Res}(\varphi)\ge 2^{\width_{Res}(\varphi)-k}$.
\end{proposition}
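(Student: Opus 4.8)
The plan is to establish the proposition in the equivalent size--width form of Ben-Sasson and Wigderson: every unsatisfiable CNF $\varphi$ that has a tree-resolution refutation with $s$ leaves also has \emph{some} resolution refutation (not necessarily tree-like) in which every clause has width at most $w(\varphi)+\log_2 s$, where $w(\varphi)$ denotes the maximum clause-size of $\varphi$. Since $w(\varphi)\le k$ and the number of leaves is at most $\treesize_{Res}(\varphi)$, rearranging $\width_{Res}(\varphi)\le w(\varphi)+\log_2 s$ immediately yields $\treesize_{Res}(\varphi)\ge 2^{\width_{Res}(\varphi)-k}$.

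I would prove the width bound by induction on the number of leaves $s$. The base case $s=1$ is immediate: the single leaf must be labeled $\bot$, so $\bot\in\varphi$ and the empty refutation has width $0$. For the inductive step, let $x$ be the variable resolved at the root; the two subtrees derive the unit clauses $(x)$ and $(\bar x)$ and have leaf counts summing to $s$, so (by symmetry) I may assume the subtree $T_x$ deriving $(x)$ has at most $s/2$ leaves. Restricting $T_x$ by $x=0$ and pruning the now-trivial subtrees gives a tree-resolution refutation of $\varphi\restriction_{x=0}$ with at most $s/2$ leaves; by the induction hypothesis $\varphi\restriction_{x=0}$ has a refutation of width at most $w(\varphi)+\log_2(s/2)=w(\varphi)+\log_2 s-1$. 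Re-inserting the literal $x$ into every clause of this refutation that descends from an axiom of $\varphi$ shortened by the restriction turns it into a valid derivation from $\varphi$ of a subclause of $(x)$, raising each clause width by at most one, hence of width at most $w(\varphi)+\log_2 s$. If this subclause is $\bot$ we are done, so assume it is exactly $(x)$.

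It then remains to refute $\varphi$ using the derived unit clause $(x)$. Restricting the other subtree by $x=1$ gives a tree-resolution refutation of $\varphi\restriction_{x=1}$ with fewer than $s$ leaves, so by the induction hypothesis there is a refutation $\pi$ of $\varphi\restriction_{x=1}$ of width at most $w(\varphi)+\log_2 s$. I would lift $\pi$ back to $\varphi$ as follows: an axiom of $\varphi\restriction_{x=1}$ coming from a clause $C\in\varphi$ that contains neither literal of $x$ is a genuine axiom of $\varphi$; one coming from a clause $C$ containing $\bar x$ equals $C$ with $\bar x$ deleted and can be derived in a single resolution step from $C$ and the already-derived clause $(x)$, of width at most $w(\varphi)$; and axioms coming from clauses containing $x$ are trivially true and can be discarded. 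Prepending the derivation of $(x)$ produces a resolution refutation of $\varphi$ of width at most $w(\varphi)+\log_2 s$, completing the induction; the bound $w(\varphi)\le k$ then gives the proposition.

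The only real obstacle is the routine but slightly fiddly bookkeeping around restricting and ``un-restricting'' resolution proofs: verifying that restricting a tree-like refutation by a literal and pruning yields a valid tree-like refutation of the restricted formula with no more leaves; that adding one literal back to a refutation of the restricted formula keeps every inference valid, raises each clause width by at most one, and terminates in the expected unit clause; and that splicing the derivation of $(x)$ into $\pi$ through single-step resolutions against the relevant axioms preserves both validity and the width bound. All of the content of the trade-off lives in these steps, but none of it is deep.
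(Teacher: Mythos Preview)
Your argument is correct and is precisely the original Ben-Sasson--Wigderson induction for the tree-like case. The paper itself does not give a proof of this proposition; it simply quotes the result with a citation to~\cite{DBLP:journals/jacm/Ben-SassonW01}, so there is nothing to compare against beyond noting that you have reproduced the intended source argument. The bookkeeping you flag (restricting a tree-like proof by a literal, re-inserting the literal to turn a refutation of $\varphi\!\restriction_{x=0}$ into a derivation of $(x)$ of width one larger, and splicing $(x)$ into the lifted refutation of $\varphi\!\restriction_{x=1}$ via single resolutions against axioms containing $\bar x$) is exactly the standard machinery and goes through without surprises.
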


Every unsatisfiable CNF formula $\varphi$ yields an
associated total search problem $\search_\varphi$ which
takes as input an assignment $z\in\set{0,1}^Z$ and produces the name of a clause that
is falsified by $z$.

\begin{proposition}
\label{prop:dpll}
Given any unsatisfiable formula $\varphi$,
minimal tree-resolution refutations of $\varphi$ and
decision trees solving $\search_\varphi$ are isomorphic.
This isomorphism identifies nodes that resolve on a 
variable $z_i$ with those that branch on variable $z_i$.
\end{proposition}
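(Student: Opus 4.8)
The plan is to give two explicit, mutually inverse constructions — one turning a decision tree solving $\search_\varphi$ into a tree-resolution refutation of $\varphi$ and one doing the reverse — and then to check that each sends a minimal object to a minimal object, which yields the claimed isomorphism together with the labeling correspondence.

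From a decision tree to a refutation, I would work from the leaves upward. Given a decision tree $T$ solving $\search_\varphi$, attach to every node $v$ a clause $C_v$ maintaining the invariant that the partial assignment $\rho_v$ labeling the root-to-$v$ path falsifies $C_v$. At a leaf $v$, $T$ outputs the name of some clause $C$ of $\varphi$ falsified by $\rho_v$; set $C_v:=C$. At an internal node $v$ querying $z_i$, with child $v_0$ on the edge $z_i=0$ and child $v_1$ on the edge $z_i=1$: since $\rho_{v_0}=\rho_v\cup\{z_i=0\}$ falsifies $C_{v_0}$, the literal $\overline{z_i}$ cannot occur in $C_{v_0}$, and symmetrically $z_i\notin C_{v_1}$. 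If $z_i\in C_{v_0}$ and $\overline{z_i}\in C_{v_1}$, then resolving on $z_i$ yields $C_v:=(C_{v_0}\setminus\{z_i\})\cup(C_{v_1}\setminus\{\overline{z_i}\})$, which is still falsified by $\rho_v$; otherwise one of $C_{v_0},C_{v_1}$ is already falsified by $\rho_v$, and then replacing $v$ by that child's subtree gives a strictly smaller decision tree still solving $\search_\varphi$. Hence if $T$ is minimal only the first case ever occurs, every internal node is a genuine resolution step on the queried variable, and the clause attached at the root, being falsified by the empty assignment, must be $\bot$; the attached clauses thus form a tree-resolution refutation whose underlying tree is exactly $T$.

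Conversely, from a refutation to a decision tree, I would work from the sink downward. View a tree-resolution refutation $\pi$ as a tree with the $\bot$-node as root and $\varphi$-clauses at the leaves, and build $T$ on the same tree with the invariant that the root-to-node path assignment falsifies the clause at that node. This holds vacuously at the root. At a node labeled $C=A\vee B$ derived by resolving $A\vee z_i$ and $B\vee\overline{z_i}$ on $z_i$, let $T$ query $z_i$, routing the edge $z_i=0$ to the premise $A\vee z_i$ and the edge $z_i=1$ to the premise $B\vee\overline{z_i}$; the invariant is preserved along both edges. At a leaf labeled by a clause of $\varphi$, the path assignment falsifies it, so output its name. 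This $T$ solves $\search_\varphi$ and has the same size and shape as $\pi$.

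These maps are size- and shape-preserving, and the first one applied to a minimal decision tree inverts the second, since a resolvent $A\vee B$ is uniquely determined by its two premises and the resolved variable. To finish, I would observe that the image of a minimal refutation under the second construction is itself minimal — otherwise pruning the resulting tree and applying the first construction would produce a smaller refutation — and likewise in the other direction, so minimal objects correspond under a tree isomorphism that matches a resolution step on $z_i$ with a branch on $z_i$. I do not anticipate a real obstacle: the argument is just careful bookkeeping of the falsification invariant, and the one point that deserves attention is the minimality step, namely verifying that a decision tree which ever queries a variable that turns out not to be needed for the resolution step (the second case above) can be strictly shortened, so that in a minimal tree every query is a bona fide resolution.
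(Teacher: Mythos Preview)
The paper does not prove this proposition; it is recorded as a well-known folklore fact (the DPLL correspondence) and invoked later without argument. Your two constructions are the standard ones and your plan is essentially correct.

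On the one step you flag: the claim that replacing the subtree at $v$ by the subtree at $v_0$ still solves $\search_\varphi$ needs one extra remark. The danger is that some leaf $\ell$ below $v_0$ outputs a clause $C_\ell$ containing the literal $z_i$; then an input with $z_i=1$ that reaches $\ell$ in the pruned tree would satisfy $C_\ell$, and the pruned tree would be incorrect. To rule this out, choose $v$ to be a \emph{deepest} node at which the resolution case fails. Then every internal node strictly below $v$ is a genuine resolution step, and an easy induction on the subtree gives
\[
C_\ell\ \subseteq\ C_{v_0}\ \cup\ \{\,z_j,\overline{z_j}\ :\ z_j\text{ is queried on the path from }v_0\text{ to }\ell\,\}.
\]
Since $z_i\notin C_{v_0}$, $\overline{z_i}\notin C_{v_0}$, and $z_i$ is not queried below $v_0$ (no variable is queried twice on a root-to-leaf path), neither $z_i$ nor $\overline{z_i}$ appears in any such $C_\ell$; hence $C_\ell$ is still falsified after flipping $z_i$, and the pruned tree is a valid, strictly smaller decision tree for $\search_\varphi$. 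With this refinement the contradiction to minimality of $T$ goes through and your argument is complete.
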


Each node in a decision tree naturally corresponds to the sub-cube of the input set $\set{0,1}^Z$ given by the constraints on the path to the node from the root.   
Similarly, any clause $C$ can be identified with a sub-cube of $\set{0,1}^Z$ consisting of the set of all inputs falsified by $C$.   
We say that a clause $C$ is a \emph{weakening} of a clause $A$ iff there is some clause $B$ such that $C=A\lor B$; alternatively this is equivalent to saying
that the sub-cube corresponding to $C$ is contained in
the sub-cube corresponding $A$.
With this correspondence, the isomorphism in the above
proposition means that the sub-cube for each node in the
decision tree corresponds to a weakening of the clause at the isomorphic node in the tree-resolution refutation.

A \emph{$Res(\oplus)$ refutation} of an unsatisfiable CNF formula $\varphi$ in variables $Z$ is a sequence of affine subspaces of $\mathbb{F}_2^Z$ ending in
the subspace $\mathbb{F}_2^Z$ such that each subspace in the list is either
the sub-cube corresponding to a clause of $\varphi$ or follows from two prior
subspaces $A$ and $B$ via the inference rule:
$$\frac{A,\ B}{C}\qquad \mbox{if } C\subseteq A\cup B.$$

(Alternatively, one can replace each subspace by an expression for its dual, namely a disjunction of parity equations over $\mathbb{F}_2$, each of which is the negation of one of a set of linear equations defining the affine subspace.
Therefore, each line is a clause over parities and the subspace is the set
of inputs that falsifies it.
In this way, the single inference rule generalizes the weakening rule of resolution and generalizes resolving on a single
literal to resolving on a parity of variables; it is not hard to see that the subspace $C$ must be contained
entirely in $A$ or in $B$ unless there is a unique
linear constraint defining $A$ whose negation is a defining equation for $B$. 
This representation is not unique since there may be many different choices
of defining equations for an affine subspace, but we assume that the 
proof is representation independent.
As noted by Isykson and Sokolov~\cite{DBLP:conf/mfcs/ItsyksonS14,DBLP:journals/apal/ItsyksonS20},
the semantic view of $Res(\oplus)$ refutations we have presented is a standard proof system in the sense of Cook and Reckhow~\cite{DBLP:journals/jsyml/CookR79} since inference is explicitly verifiable in polynomial time.)

As with resolution refutations, we can define the dag of indegree (at most) 2 associated with a $Res(\oplus)$
refutation and let \emph{tree-$Res(\oplus)$} be the proof system consisting
of $Res(\oplus)$ refutations whose associated dag is a tree.
For an unsatisfiable CNF formula $\varphi$, let $\treesize_{Res(\oplus)}(\varphi)$ be the minimum size of
any \emph{tree-$Res(\oplus)$} refutation of $\varphi$.
Using the analogous ideas to the isomorphism between
decision trees and minimal tree-resolution proofs, Itsykson and
Sokolov proved the following correspondence:

\begin{proposition}\cite{DBLP:conf/mfcs/ItsyksonS14,DBLP:journals/apal/ItsyksonS20}
\label{prop:reslin2}
Given any unsatisfiable formula $\varphi$,
minimal tree-$Res(\oplus)$ refutations of $\varphi$ and
parity decision trees solving $\search_\varphi$ are isomorphic.
This isomorphism identifies nodes that resolve on a 
parity function $\oplus_S(z)$ with those that branch on $\oplus_S(z)$.   
\end{proposition}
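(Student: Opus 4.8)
The plan is to generalize the classical equivalence between tree-resolution refutations and decision trees (Proposition \ref{prop:dpll}), replacing subcubes of $\bits^Z$ by affine subspaces of $\mathbb{F}_2^Z$ throughout, and branching/resolving on a single literal by branching/resolving on a parity. To set up both directions, associate with each node $v$ of a parity decision tree solving $\search_\varphi$ the affine subspace $\rho_v\subseteq\mathbb{F}_2^Z$ of assignments consistent with the path from the root to $v$; thus $\rho_{\mathrm{root}}=\mathbb{F}_2^Z$, a branch on $\oplus_S(z)$ at $v$ splits $\rho_v$ into $\rho_v\cap\{\oplus_S(z)=0\}$ and $\rho_v\cap\{\oplus_S(z)=1\}$, and correctness of the tree means each leaf $v$ is labeled by a clause $K$ of $\varphi$ whose associated subcube contains $\rho_v$ (equivalently, $K$ is falsified on all of $\rho_v$).

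For the direction from a parity decision tree $T$ to a tree-$Res(\oplus)$ refutation, I would label each node $v$ of $T$, by induction from the leaves upward, with an affine subspace $D_v\supseteq\rho_v$ together with a tree-$Res(\oplus)$ derivation of $D_v$ from the axiom subspaces at the leaves of $T$ below $v$. At a leaf, $D_v$ is the subcube of the output clause, which is an axiom and contains $\rho_v$. At an internal node branching on $\oplus_S(z)$ with children $v_0,v_1$ (for outcomes $0,1$), first weaken $D_{v_0}$ to $D_{v_0}\cap\{\oplus_S(z)=0\}$ and $D_{v_1}$ to $D_{v_1}\cap\{\oplus_S(z)=1\}$ (both still contain the corresponding $\rho_{v_i}$), then apply the $Res(\oplus)$ rule to these two subspaces, whose defining equations for $\oplus_S(z)$ are negations of each other, to derive $D_v:=\rho_v$; this is a step that resolves on $\oplus_S(z)$, since $\rho_v$ is contained in the union of the two premises but in neither of them. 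At the root we obtain $D_{\mathrm{root}}=\mathbb{F}_2^Z$, i.e.\ a tree-$Res(\oplus)$ refutation whose underlying tree is $T$ (up to the inserted weakening steps).

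For the converse, given a tree-$Res(\oplus)$ refutation $R$, I would walk it from the root maintaining an affine subspace $\rho$ (initially $\mathbb{F}_2^Z$) and a pointer to a subspace $D$ of $R$ with the invariant $\rho\subseteq D$. If $D$ is the axiom subspace of a clause $K$, output $K$ at a leaf; this is a correct answer for $\search_\varphi$ since every assignment reaching the leaf lies in $\rho\subseteq D$ and $D$ is exactly the set of assignments falsifying $K$. Otherwise $D$ is inferred from $D_0,D_1$ with $D\subseteq D_0\cup D_1$, so $\rho\subseteq D_0\cup D_1$; if $\rho\subseteq D_i$ for some $i$ (in particular at any weakening-only step) move the pointer to $D_i$ without branching, and otherwise $\rho\cap D_0$ and $\rho\cap D_1$ are nonempty proper affine subspaces of $\rho$ whose union is $\rho$. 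The key linear-algebra fact I would invoke here is that an affine space over $\mathbb{F}_2$ equals the union of two proper affine subspaces only when these are the two cosets of a common index-$2$ subspace; hence $\rho\cap D_0$ and $\rho\cap D_1$ are cut out within $\rho$ by $\oplus_S(z)=c$ and $\oplus_S(z)=1-c$ for some $S\subseteq Z$ and $c\in\mathbb{F}_2$. Branch on $\oplus_S(z)$, update $\rho$ to the appropriate halfspace on each side (nonempty and still contained in $D_0$ resp.\ $D_1$), and move the pointer accordingly. The process terminates because the pointer strictly descends in the finite tree $R$, and it produces a parity decision tree for $\search_\varphi$ that branches on $\oplus_S(z)$ precisely where $R$ resolves on $\oplus_S(z)$.

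This fact — the $\mathbb{F}_2$ analogue of the observation that a clause falsified by an assignment setting $z_i=0$ must contain the literal $z_i$ rather than $\overline{z_i}$ — is the only nonroutine ingredient, and the point requiring the most care is the word ``minimal'' in the statement: one must check that the two constructions restrict to mutually inverse, tree-structure-preserving bijections between minimal tree-$Res(\oplus)$ refutations and minimal parity decision trees, so that the weakening/non-branching steps inserted or skipped above disappear and the parity labels match node by node. This runs exactly parallel to the resolution case underlying Proposition \ref{prop:dpll} and the subcube/weakening correspondence already described, now with parities and affine subspaces in place of literals and subcubes.
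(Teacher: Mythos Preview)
The paper does not actually give a proof of this proposition: it is stated as a cited result of Itsykson and Sokolov, with only the remark that it follows ``using the analogous ideas to the isomorphism between decision trees and minimal tree-resolution proofs.'' So there is no paper proof to compare against line by line.

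That said, your proposal is exactly the analogous argument the paper alludes to, and it is correct. The only point worth flagging is the matching of parity \emph{labels} in the isomorphism. In your refutation-to-tree direction, the parity $\oplus_S(z)$ you branch on is determined only as a linear functional on $\rho_v$, not on all of $\mathbb{F}_2^Z$; different extensions give different $S$. Conversely, in a $Res(\oplus)$ step the paper notes that a genuine (non-weakening) inference has ``a unique linear constraint defining $A$ whose negation is a defining equation for $B$,'' so the proof system itself pins down a specific $S$. For the node-by-node identification of parity labels claimed in the proposition, you should observe that in a \emph{minimal} refutation $D_v=\rho_v$ at every node (no slack from weakening), so the coset decomposition $\rho_v=(\rho_v\cap D_0)\cup(\rho_v\cap D_1)$ coincides with the one given by the distinguished constraint of the inference, and the labels agree. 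You essentially say this in your final paragraph; just make the reason explicit.
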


\paragraph*{Lifting CNF formulas with $\indgadget_m$:}
There are a few options for how to do this.   
Since we will use this when $m=2^\ell$ is a constant, we choose a simple option that has $N\ell$ Boolean variables $x_{i,j'}$ for $j'\in \set{0,\ldots, \ell-1}$ and $i\in [N]$ and $Nm$ Boolean variables $y_{i,j}$ for $j\in\set{0,\ldots,m-1}$ and $i\in [N]$.
As usual, the interpretation we have for $\indgadget_m^N$ is that $z_i=y_{i,x_{i,\ell-1}...x_{i,0}}$.

Given a $k$-CNF formula $\varphi$ in $N$ variables $Z=\set{z_1,\ldots,z_N}$,
we define an $(\ell+1)k$-CNF formula $\varphi\circ \indgadget_m^N$ on
the $x_{i,j}$ and $y_{i,k}$ variables as follows:
Each clause $(z_{i_1}^{b_1}\lor \cdots z_{i_k}^{b_k})$ of
$\varphi$ is replaced by $m^k$ clauses of length
$(\ell+1) k$, one for each
tuple $(j^{(1)},\ldots,j^{(k)})$, which expresses the
statement that if the $\ell$ bits $x_{i_1,*}$ encode
value $j_1$, those of $x_{i_2,*}$ encode $j_2,...$ and
those of $x_{i_k,*}$ encode $j_k$, then 
$y_{i_1,j_1}^{b_1}\lor \cdots y_{i_k,j_k}^{b_k}$ must be
true.
With this definition, each clause of $\varphi\circ\indgadget_m^N$ 
corresponds to a unique clause of $\varphi$.   
Moreover, a falsified clause of $\varphi\circ\indgadget_m^N$ on one of its input vectors
yields a falsifying assignment to the corresponding
clause of $\varphi$ under the vector of $z$ values
given by $\indgadget_m^N$.

\section{On the insufficiency of low deficiency and
high min-entropy rate}

In this section we prove \cref{thm:main,thm:inner-product}.

Let $K$ be any function of $N$ with $K\ge 1$
for all $N$ 
and assume that $2^m\le N/(K\Delta)$.
We will construct a specific distribution $\setY$, with deficiency bounded by $\Delta$ and min-entropy rate at
least $1-1/m$, such that for any $I\subseteq [N]$ with $|I|\leq (K-1)\Delta$, $\indgadget_m^{[N]\setminus I}([m]^N,\setY)$ does not contain the all-1 string. 

We establish this simply by showing that for all $(x,y)\in [m]^N \times \setY$, $\indgadget_m^N(x,y)$ has Hamming weight more than $(K-1)\Delta$. 
Thus any projection of $\indgadget(\setX,\setY)$ onto $N-|I|\ge N-(K-1)\Delta$ coordinates will contain at least one $1$, and would therefore miss the all-0 string. 

To this end, we will construct $\setY$ as the uniform distribution on a subset $S\subseteq (\bits^m)^N$ with the following properties :
\begin{enumerate}
    \item\label{prop:AtLeastKOnes} Every $y\in S$ has at least $k > (K-1)\Delta$ blocks that are equal to $1^m$.
    \item\label{prop:LowDeficiency} $|S|\ge 2^{mN-\Delta}$
    \item\label{prop:HighMinEntropyRate} The min-entropy
    on any subset of $b\le N$ blocks is at least $(m-1)b$.
\end{enumerate}
The second property ensures that the deficiency of $\setY$ is at most $\Delta$ and the first property guarantees that every
output has at least $k$ 1-bits no matter what the input $x$ is. 
The third property is simply that the min-entropy rate
is at least $(1-1/m)$.

The essence of the proof idea applies in the case that $\Delta=1$; the general case is a simple extension
of that special case:

 \paragraph*{Counterexample when $\Delta=1$:} 
 We derive a counterexample in this case
 by choosing $k=\lfloor K\rfloor>K-1$ and setting $S$ to be the set of all inputs in $(\bits^m)^N$ that have at least $k$ blocks of the form $1^m$; i.e., all-1 blocks.
 
The key observation that makes this work is the 
following:
 
 \begin{observation}
 \label{all-1}
For $y$ chosen uniformly at random from $(\bits^m)^N$, the number of all-1 blocks in $y$ is distributed according to the binomial distribution $B(N,1/2^m)$.
\end{observation}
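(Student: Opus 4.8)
The plan is simply to unwind the definitions. A uniformly random $y\in(\bits^m)^N$ is exactly a tuple $(y_1,\dots,y_N)$ of mutually independent blocks, each $y_i$ uniform on $\bits^m$; this product structure is immediate since $\card{(\bits^m)^N}=(2^m)^N$ and the uniform distribution on a product of finite sets is the product of the uniform distributions on the factors. I would state this and move on.

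Next, for each $i\in[N]$ I would introduce the indicator random variable $X_i$ of the event $\set{y_i=1^m}$. Since $y_i$ is uniform over the $2^m$ strings of $\bits^m$ and exactly one of them is $1^m$, we get $\Pr[X_i=1]=2^{-m}$; and since the blocks $y_1,\dots,y_N$ are independent, so are the $X_i$. Hence $X_1,\dots,X_N$ are i.i.d.\ $\mathrm{Bernoulli}(2^{-m})$ random variables.

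Finally, the number of all-$1$ blocks of $y$ is by definition $\sum_{i=1}^N X_i$, and a sum of $N$ i.i.d.\ $\mathrm{Bernoulli}(p)$ variables is, by the definition of the binomial distribution, distributed as $B(N,p)$; taking $p=2^{-m}$ gives the claim. There is essentially no obstacle here: the only point worth making explicit is the equivalence between ``uniform on the product space'' and ``coordinates independent and each uniform,'' after which the statement is just the textbook characterization of $B(N,p)$ as a sum of independent Bernoulli trials.
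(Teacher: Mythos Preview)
Your proposal is correct; the paper states this as an observation without proof, treating it as immediate from the definitions, and your argument is exactly the natural unpacking of why it holds.
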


In  particular, since $\Delta=1$ this means that the expected number of all-1
blocks in $y$ chosen uniformly from $(\bits^m)^N$ is 
$N/2^m\ge K$.
By applying a bound on the median of binomial distributions, we obtain the following:

\begin{lemma}
\label{all-1-tail}
For $2^m\le N/K$, at least 
$1/2$ of all strings in $(\bits^m)^N$ have more than
$K-1$ all-1 blocks.
\end{lemma}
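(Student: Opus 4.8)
\textbf{Proof plan for \cref{all-1-tail}.}
The statement to prove is that when $2^m\le N/K$, at least half of the strings in $(\bits^m)^N$ have more than $K-1$ all-$1$ blocks. By \cref{all-1}, if $y$ is drawn uniformly from $(\bits^m)^N$ then the random variable $W$ counting the number of all-$1$ blocks is distributed as $B(N,p)$ with $p=1/2^m$. So the claim is equivalent to the purely probabilistic statement that $\Pr[W > K-1] \ge 1/2$, i.e.\ that the median of $B(N,p)$ is at least $K$ (using that $W$ is integer-valued, $W>K-1$ is the same as $W\ge \lceil K\rceil$, and it suffices that the median is $\ge K$).

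The plan is to invoke \cref{prop:binomomial-median}, which says the median of $B(n,p)$ lies in the interval $[\lfloor np\rfloor,\lceil np\rceil]$. Here $np = N/2^m \ge K$ by the hypothesis $2^m \le N/K$. Hence the median $M$ of $W$ satisfies $M \ge \lfloor N/2^m\rfloor \ge \lfloor K\rfloor$. The small subtlety is that $\lfloor K\rfloor$ could in principle be strictly less than $K$ when $K$ is not an integer, so I would argue slightly more carefully: since the median is an integer and the median is at least $\lfloor np\rfloor$, and we actually want $\Pr[W\ge \lceil K\rceil]\ge 1/2$; when $K$ is an integer this is immediate from $M\ge K$, and when $K$ is not an integer we have $\lceil K\rceil = \lfloor K\rfloor + 1$, and then I would use the stronger fact that for $np$ strictly above an integer the median is still at least that integer and in fact $\Pr[W \ge \lfloor np\rfloor]\ge 1/2$ suffices together with $W>K-1 \iff W\ge \lfloor K\rfloor$ when $K\notin\mathbb{Z}$. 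In other words, $W>K-1$ always reduces to $W\ge \lceil K-1\rceil'$ where the relevant integer threshold is at most $\lfloor np\rfloor$, so by \cref{prop:binomomial-median} the median dominates this threshold and $\Pr[W>K-1]\ge 1/2$.

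Concretely the steps are: (i) translate the set-counting statement into $\Pr_{y}[\#\{\text{all-1 blocks}\} > K-1]\ge 1/2$ via the uniform measure on $(\bits^m)^N$; (ii) apply \cref{all-1} to identify this count with $W\sim B(N,1/2^m)$; (iii) observe $N/2^m\ge K$ from the hypothesis; (iv) apply \cref{prop:binomomial-median} to conclude the median of $W$ is at least $\lfloor N/2^m\rfloor$, which is at least the integer threshold governing the event $\{W>K-1\}$; (v) conclude $\Pr[W>K-1]\ge 1/2$ and hence that at least half of all strings have more than $K-1$ all-$1$ blocks.

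I do not expect any real obstacle here; the only thing to be careful about is the floor/ceiling bookkeeping when $K$ is not an integer, making sure that ``more than $K-1$'' is correctly matched to ``at least $\lfloor K\rfloor$'' (for non-integer $K$) or ``at least $K$'' (for integer $K$), and that in both cases this integer threshold is $\le \lfloor N/2^m\rfloor$ so that \cref{prop:binomomial-median} applies. Everything else is a direct substitution.
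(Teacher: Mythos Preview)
Your proposal is correct and follows essentially the same route as the paper: identify the count of all-$1$ blocks with $B(N,1/2^m)$ via \cref{all-1}, invoke \cref{prop:binomomial-median} to get that the median is at least $\lfloor N/2^m\rfloor\ge\lfloor K\rfloor>K-1$, and conclude using that $W$ is integer-valued. The paper's proof is just the three-line version of your steps (i)--(v); your floor/ceiling discussion is more explicit than the paper's, but the paper's one-line inequality $\lfloor K\rfloor>K-1$ together with integrality already handles both the integer and non-integer cases of $K$ without needing the separate case analysis you sketch.
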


\begin{sloppypar}
\begin{proof}
By~\cref{all-1}, the number of all-1 blocks is given by the binomial distribution $B(N,1/2^m)$. 
By \cref{prop:binomomial-median}, the median of this
distribution is at least $\lfloor N/2^m\rfloor\ge \lfloor K \rfloor > K-1$.
The claim follows since
the binomial is integer-valued.
\end{proof}
\end{sloppypar}

\cref{all-1-tail} shows that $\setY$ is a uniform
distribution with deficiency
at most 1, which means that the projection on any $b$
blocks has min-entropy at least $mb-1$ and hence $\setY$
has min-entropy rate at least $1-1/m$.

\paragraph*{Counterexample for $\Delta>1$ but $o(N)$:}
For this we assume without loss of generality that $\Delta$ is
an integer and $N=N'\Delta$ for some integer $N'$ since rounding can add only $\Delta-1$ extra coordinates. 

Since $2^m\le N/(K\Delta)$, we have
$2^m\le N'/K$.
This means that we can use the counterexample distribution
$\setY'$ for the
case $\Delta=1$ on $N'$ coordinates.
We define $\setY$ to the direct 
product of $\Delta$ independent copies of the distribution $\setY'$ on disjoint coordinates.

By construction, $\setY$ has
deficiency at most $\Delta$; it also has min-entropy
rate at least that of each $\setY'$ which is $1-1/m$
since it is a product over disjoint coordinates.
Also by construction, every $y$ in the support of $\setY$ has
more than $(K-1)\Delta$ all-1 blocks which means that no
set $I$ of at most $(K-1)\Delta$ coordinates cannot
cover all of the all-1 blocks of $y$, which would be necessary to
have
the all-0 string in
$\indgadget_m^{[N]\setminus|I|}([m]^{N},\setY)$.

\paragraph*{Extending the counterexamples to other gadgets}
\label{subsec:extendingToOtherGadgets}
We note that the above result also disproves a similar conjecture for any 2-party gadget $g:X\times Y\rightarrow \bits$
whose communication matrix has a row or column that has a
constant value.
For example, the Inner-Product gadget on $\IP_b:\bits^b\times
\bits^b$ has
this property for the row or column of its communication matrix
indexed by $0^b$. 
That is, $\langle x,0^b\rangle=0$ for all
$x\in \bits^b$. 
It is easy to see that the above analysis works
to disprove the analogous conjecture for Inner-Product under the
same conditions, though
in this case, the output vector that would be missed is
the all-1 vector.
\section{Lifting theorem for semi-structured protocols}

Since \cref{conj:TonisConj} is false for $m = (1- o(1) )\log N$, without modification, existing techniques cannot reduce the gadget size below this threshold. 
However, this does not rule out other approaches to proving lifting theorems with very small gadgets, even for constant-size ones. 
Are such theorems 
with constant-sized gadgets  possible at all?   

As a first step towards answering this question, in this section we prove a (deterministic) lifting theorem with constant-size Index gadgets for a restricted family of communication protocols.

The restricted family that we consider are deterministic communication protocols in which Alice is unrestricted, but Bob is only allowed to communicate parities of his input bits. 
Since Alice is unrestricted and Bob is restricted, 
we call such protocols \emph{semi-structured} or
$(*,\oplus)$-protocols.

For a Boolean function $F: X \times Y \to \bits$,
we use $C^{*,\oplus}(F)$ to denote the
deterministic communication complexity of $F$ by 
such protocols.
Similarly, for a relation (search problem) $R\subseteq X\times Y\times W$
we use $C^{*,\oplus}(R)$ to denote the deterministic
communication complexity of such protocols solving $R$,
that is, when Alice receives $x\in X$ and Bob receives
$y\in Z$, the protocol outputs some $w\in W$ with
$(x,y,w)\in R$ or outputs $\bot$ if no such $w$ exists.

\begin{theorem}[Lifting theorem for semi-structured protocols]
\label{thm:liftingForSemiStructured}
Let $m \ge 4$ be an integer. 
For every $f: \bits^N \to \bits$, 
$$C^{*,\oplus}(f\circ \indgadget_m^N)\ge \frac{1}{2} C^{dt}(f) \log_2 m.$$
Furthermore for every $R\subseteq Z^N\times W$,
$$C^{*,\oplus}(R\circ \indgadget_m^N)\ge \frac{1}{2} C^{dt}(R)\log_2 m.$$
\end{theorem}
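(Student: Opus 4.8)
The plan is to run the simulation-theorem framework in the standard direction: starting from a $(*,\oplus)$-protocol $\Pi$ of cost $c$ for $f\circ\indgadget_m^N$, I will extract a decision tree for $f$ of height at most $2c/\log_2 m$. The key structural feature that makes the proof work for constant-size gadgets is that Bob's messages are parities of $y$. Think of Bob's input as $y\in(\bits^m)^N$, i.e.\ a vector over $\mathbb{F}_2$ of length $mN$; after $t$ bits have been communicated by Bob, the set $\setY$ of $y$-inputs consistent with the transcript is an \emph{affine subspace} of $\mathbb{F}_2^{mN}$ of codimension exactly $t$. Alice's messages, by contrast, only partition $X^N=[m]^N$ into rectangles on the $\setX$ side and do not constrain $\setY$ at all. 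So throughout the simulation the invariant on Bob's side is simply: \emph{$\setY$ is an affine subspace of small codimension}, and the codimension is bounded by the number of bits Bob has sent.

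The heart of the argument is the following disperser-type claim, which replaces \cref{conj:TonisConj}: if $\setY\subseteq(\bits^m)^N$ is an affine subspace of codimension $d$, and $I\subseteq[N]$ is a set of coordinates such that the codimension ``used up'' inside the blocks of $I$ is less than (say) $d$ in a suitable accounting, then for \emph{every} $z_I\in\bits^I$ there is a point $y\in\setY$ with $\indgadget_m^I(x,y)=z_I$ for the current $x$ — in fact one wants this simultaneously with consistency, i.e.\ restricted to the affine subspace. The clean way to set this up: maintain the invariant that the projection $\setY_I$ onto the $mI$ coordinates in the queried blocks is \emph{all} of $(\bits^m)^I$ (a ``free'' set of blocks), and that the number of blocks that have become non-free is at most (bits Bob has sent)$/\log_2 m$, using that making a single block non-free — collapsing $(\bits^m)$ down to something that no longer surjects onto the index-output for the relevant Alice value — costs an affine constraint, and since $m\ge 4$, i.e.\ $\log_2 m\ge 2$, one collapse can ``pay for'' at most roughly $\log_2 m$ worth of bits before it is forced. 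The precise amortization is where the factor $\tfrac12\log_2 m$ comes from: each query in the decision tree corresponds to fixing one block, which costs at least one bit from Bob, but Alice may also spend up to $\log_2 m$ bits resolving her index in that block, so each decision-tree level is charged $\le 2$ bits after pairing, wait — more carefully, each block that the tree queries forces Bob to have spent $\ge 1$ bit and Alice $\le \log_2 m$ bits there is too crude; the right statement is that each queried block costs $\ge \log_2 m$ total bits because when a block is ``revealed'' the protocol must have pinned down both $x_i\in[m]$ (Alice, $\log_2 m$ bits worth) and enough of $y_i$ to determine $y_{i,x_i}$ (Bob, $\ge 1$ bit), and a counting/potential argument bounds the number of such blocks by $2c/\log_2 m$. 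I would carry this out as: (1) set up the affine-subspace invariant and the free/fixed block partition; (2) define the simulation step — follow $\Pi$; whenever a block becomes ``dangerous'' (Alice's index pinned and Bob's subspace no longer free on that block), query $z_i$ and split $T$; (3) prove the consistency and extensibility invariants are preserved, using that fixing one block of an affine subspace to be non-surjective costs a constraint; (4) amortize to bound $|I|$, hence the tree height, by $2c/\log_2 m = 2C^{*,\oplus}(f\circ\indgadget_m^N)/\log_2 m$.

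For the relational/search-problem version, exactly the same simulation produces a decision tree; one only needs that a leaf of $\Pi$ outputs some $w\in W$, and the consistency invariant guarantees that $w$ is a valid answer for the $z_I$-consistent cube, so $w$ is a correct output for $R$ at that leaf of $T$ — no change to the argument is required beyond replacing ``$f(z)$'' by ``some valid $w$''.

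The main obstacle I expect is getting the amortization tight enough to yield the constant $\tfrac12$ and the requirement $m\ge 4$ rather than, say, $m\ge $ some larger constant: one must argue carefully that a single affine constraint from Bob cannot simultaneously destroy the free-ness of too many blocks \emph{in a way that matters for Alice's actual indices}, and that Alice's $\log_2 m$ bits per block are genuinely necessary (this uses that $\indgadget_m$ has a full-rank-like structure: for a random-looking affine subspace of small codimension, $y_{i,j}$ over the free block is still an unconstrained bit for every $j$, so Alice really must identify $x_i$ before $z_i$ is determined). Handling the interaction where Bob's parity constraints span \emph{across} blocks — coupling coordinates in different blocks — is the delicate point; the saving grace is that such a cross-block constraint still only raises the total codimension by $1$, so the global potential ``codimension spent'' $\le c$ still controls everything, and cross-block constraints can only help keep individual blocks free. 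Making this potential argument rigorous, rather than the block-local intuition, is the step I would spend the most care on.
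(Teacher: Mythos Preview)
Your proposal correctly sets up the affine-subspace invariant on $\setY$ and recognizes that this is the structural feature enabling constant gadget size, but the mechanism you describe for deciding \emph{when} to query and for the amortization has a real gap. Your trigger ``query $z_i$ when Alice's index is pinned and Bob's subspace is no longer free on that block'' does not work: Alice need never pin $x_i$ to a single value, and there is no reason the protocol must spend $\log_2 m$ bits identifying $x_i$ before $z_i$ becomes relevant. Your revised accounting ``each queried block costs $\ge \log_2 m$ total bits because the protocol must have pinned down both $x_i$ and enough of $y_i$'' is simply false as a per-block statement; a single parity bit from Bob can force a query, and no Alice bits at all may have been spent on that block.

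What the paper does instead is maintain \emph{min-entropy rate} $\tau=1/2$ on $\setX_{[N]\setminus I}$ and trigger queries via GPW-style density restoration, not via any block-local ``danger'' criterion. The link between Bob's communication and $\setX$ is the key step you are missing: when Bob sends a new parity, row-reduce $\setE$ to expose exactly one new dependent coordinate $(i^*,j^*)$, and then \emph{delete from $\setX$ every $x$ with $x_{i^*}=j^*$}. This enforces the invariant that on unfixed blocks $\setX$ only points to free coordinates of $\setY$, which is what gives extensibility. The amortization is then a potential argument on $\dinfy(\setX_{[N]\setminus I})$: each Alice bit raises it by $\le 1$; each Bob bit raises it by $\le 1$ because $\Pr_{x\sim\setX}[x_{i^*}=j^*]\le m^{-\tau}\le 1/2$ (this is exactly where $m\ge 4$ enters, since $m^{-1/2}\le 1/2$ iff $m\ge 4$); and each density-restoration query \emph{lowers} the potential by at least $(1-\tau)\log_2 m$. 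Non-negativity of the potential gives $|I|\le (A+B)/((1-\tau)\log_2 m)=2(A+B)/\log_2 m$, which is the bound. Your intuition that ``cross-block constraints only help'' is morally right but is made precise by the row-reduction step: a cross-block parity still contributes exactly one dependent coordinate, hence one unit of potential increase.
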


Let $\detp$ be a $(*,\oplus)$-protocol for $R\circ \indgadget_m^N$ of complexity 
$C^{*,\oplus}(R)$.
Without loss of generality we can assume that $C^{*,\oplus}(R)\le  N (\log_2 m +1)$.
Following the lifting theorem paradigm we prove \cref{thm:liftingForSemiStructured} by
showing how to produce a decision tree $T$ for $R$
of height at most that of $\detp$ by simulating
$\detp$.

\subsection{High level overview and invariants}
Since Bob is only allowed to communicate parity equations, we will follow the lifting
paradigm and maintain the set $\setY$ as an affine subspace over $\mathbb{F}_2^{[N]\times[m]}$.  
We will also maintain the property that  the codimension $d$ of $\setY$ is at most the total number of bits communicated during the protocol.

At every point in our simulation we will maintain
a set $\dependent_\setY\subseteq [N]\times [m]$
of \emph{dependent} coordinates. 
All other coordinates in $[N]\times [m]$ will be
\emph{free}.
We maintain $\setY$ of codimension $d$ as the 
set of a solutions of a system $\setE$ of $d$ affine equations over $\mathbb{F}_2$ in
\emph{row-reduced} form
$$\mathbf{M}\cdot y=\mathbf{b},$$
such that $\mathbf{M}$ (up to permutation of rows) is a $d\times d$ identity submatrix on the 
columns $\dependent_\setY$ that we have designated
as dependent coordinates.
It is immediate from this set-up that we have
the properties:
\begin{description}
\item[(A)] $|\dependent_\setY|=\codim(\setY)$.
\item[(B)] For every
total assignment to the free coordinates, there is a unique assignment to $\dependent_\setY$ that extends it to an element of $\setY$.
\end{description}

Bob's communication of some parity function
$\oplus_{(i,j)\in S}\  y_{i,j}$ of his
input can add at most one new affine equation to $\setE$, depending on whether or not the value of that parity function is already fixed on $\setY$.
There is no change to $\setE$ if and only if the parity is in
$\vecspan(\setE)$, the $\mathbb{F}_2$
span of the parities defining $\setE$.

As in the general lifting paradigm, we maintain a set
$I\subseteq [N]$ of \emph{fixed} indices\footnote{As in the discussion so far, to
keep notions separate we will use the term ``indices'' to refer to elements of $[N]$ and ``coordinates'' to refer to elements of $[N]\times[m]$.} on which there is a single fixed output for 
$\indgadget_m^I(\setX,\setY)$.
In addition to the above, we also maintain the following
invariants.
\begin{description}
\item[(C)] For all indices that are not fixed, elements of $\setX$ only point to free coordinates;  that is, for every $i\in [N]\setminus I$, and every $x\in \setX$ we have $(i,x_i)\notin \dependent_\setY$.
\item[(D)] $\setX$ has min-entropy rate at least
$\tau=1/2$ on $[N]\setminus I$.
\end{description}


Maintaining (C) is quite easy: Whenever we identify a new dependent coordinate $(i,j) \in [N]\times[m]$ in $\setY$, we simply update $\setX$ by removing all $x\in \setX$ with  $x_i=j$. 

We maintain (D) using the methods of G\"o\"os, 
Pitassi, and Watson (GPW) to restore the 
min-entropy rate whenever it falls too low.
Since the affine structure of $\setY$ allows for a  precise definition of dependent coordinates, our lifting theorem differs from the GPW-style lifting theorems in its parameters and philosophy\footnote{GPW never makes queries based on Bob's communication, but we do!}. These differences allow us to overcome the dependence between $m$ and $N$ in such theorems. 

\subsection{The Simulation Algorithm}

As we discussed in our high-level overview, we
maintain $\setY$ as an affine subspace of $(\bits^m)^N$ defined
by a set of linearly independent equations $\setE$
in a row-reduced form that contain an identity matrix on the set
$\dependent_\setY$ of dependent coordinates.
We need to be able to update this as new affine equations are added, either because of communication by Bob or because we have added some
$i$ to the set $I$ of fixed indices.   

For this, we define a helper function
$\rowreduce(\setE,\dependent_\setY,e)$ that takes
as input
\begin{itemize}
    \item a set of row-reduced equations $\setE$,
    \item a set $\dependent_\setY$
of coordinates for its dependent variables, and
\item a new affine equation $e$ linearly independent of
$\mathcal{E}$,
\end{itemize}
and uses Gaussian elimination to return a pair $(\setE',(i,j))$ where
$\setE'$ is equivalent to $\setE\cup\set{e}$, 
$(i,j)\notin \dependent_\setY$ and $\setE'$ is
row-reduced, as witnessed by the columns of $\dependent_\setY\cup\set{(i,j)}$.
We note that when a new equation is introduced 
while adding $i$ to the set of fixed
indices, the new equation $e$ will be of a particularly
simple form, namely $y_{i,j}=b$ where
$(i,j)\notin \dependent_\setY$ by our maintenance of invariant (C), in which case 
the new dependent coordinate returned will be
$(i,j)$.

\begin{sloppypar}

We follow a variant of the simulation algorithm of GPW\cite{DBLP:journals/siamcomp/GoosP018}, with a few modifications to identify and exclude dependent coordinates in $\setY$. 
The algorithm uses a sub-routine \textbf{RestoreMinEntropyRateAndQuery} (see  Algorithm~\ref{alg:restoreX}). This is essentially density restoration from GPW and makes sure that $\setX$ has min-entropy rate at least $\tau=1/2$ by adding
fixed indices to $I$, adding queries to the decision tree and fixing some coordinate $y_{i,\alpha_i}$ to the query answer for $z_u$. 
Note that GPW style lifting only uses this procedure to restore min-entropy rate at a node where Alice speaks, but   
we also may need this when Bob speaks because we reduce $\setX$ by
removing pointers to dependent coordinates.
Another difference in our version of density-restoration is that we only chose the first part in the partition (as we are doing deterministic lifting opposed to the randomized lifting in \cite{DBLP:journals/siamcomp/GoosPW20}).
\end{sloppypar}

\begin{algorithm}[tbp]
\DontPrintSemicolon

\caption{Simulation algorithm}
\label{algo:simulationAlgo}
\TitleOfAlgo{$\textbf{Query}_\detp(z)$}

\KwData{$z\in \set{0,1}^N$, $\setX=[m]^N$, $\setY= ( \bits )^N$, $\setE=\varnothing$, $\dependent_\setY=\varnothing$, $I=\varnothing$, $\rho=*^n$, protocol $\detp$ for $R\circ \indgadget_m^N$, $v$=root of $\detp$. }
\KwResult{element of $R(z)$.}

\While{$v$ is not a leaf}{
  Let $v_0,v_1$ be the children of $v$ following communication $0$ and $1$ respectively\;
  \If{Bob speaks at $v$} 
   {Let the parity function at $v$ be $\oplus_{(i,j)\in S}\ y_{i,j}$\;
   \eIf{$\oplus_{(i,j)\in S}y_{i,j} \in \vecspan(\setE)$}{
    $v\leftarrow v_b$ for the unique $b$ such that 
    $\oplus_{(i,j)\in S}\ y_{i,j}=b$ for all $y\in \setY$\;
    }{\tcp{Half of $\setY$ goes to $v_0$, half to $v_1$; we choose the smaller subtree.}
    Choose $b\in \bits$ with subtree rooted at $v_b$ no larger than one rooted at $v_{1-b}$.\;
    $\setY \leftarrow \set{y\in \setY \mid \oplus_{(i,j)\in S}\ y_{i,j}=b}$\;
    $\left(\,\setE,(i^*,j^*)\,\right)\leftarrow \rowreduce(\,\setE,\ \dependent_\setY, \  \oplus_{(i,j)\in S}\ y_{i,j}=b\,)$\;
    \textbf{add} $(i^*,j^*)$ to $\dependent_\setY$\label{step:AddDependentCoordinate} \;
       $\setX \leftarrow \setX \mid_{x_{i^*}\neq j^*}$ \tcp*[r]{Note: if $i^*\in I$ then $x_{i^*}\ne j^*$ already.}\label{step:updateXAvoidDependent}
     $v\leftarrow v_b$
     \; 
      }
  }
  \If{
  Alice speaks at $v$ and partitions $\setX$ into $\setX^0 \cup \setX^1$}{
  Let $b\in \bits$ be such that $\card{\setX^b}\geq \frac{1}{2} \cdot \card{\setX}$\;
  $\setX \leftarrow \setX^b$\label{step:chooseBiggerX}\; 
  $v\leftarrow v_b$\; 
  }
  \If{$\textbf{min-entropy-rate}(\setX) < \tau=1/2$}{
      \textbf{RestoreMinEntropyRateAndQuery}($\setX,z$)\;
    } 
}
\Return{label of $v$}\;

\end{algorithm}

\begin{algorithm}[tbp]
\DontPrintSemicolon
\caption{Procedure \textbf{RestoreMinEntropyRateAndQuery}}
\label{alg:restoreX}
\TitleOfAlgo{\textbf{RestoreMinEntropyRateAndQuery}($\setX,z$)}

\KwData{$\setX \subseteq [m]^N$, $z\in \{0,1\}^N$.}
\KwResult{Updates $\setX$ to restore min-entropy rate to $\tau=1/2$ by fixing coordinates via queries to $z$.}
Let $I'\subseteq [N]\setminus I$  be a maximal set on
on which $\setX$ has min-entropy rate $<\tau=1/2$\;
Let $\alpha_{I'}\in [m]^{I'}$ be such that $\Pr_{x\in \setX}[ x_{I'}=\alpha_{I'}]>m^{-\tau |I'|}$ \;
$\setX \leftarrow \set{x\in \setX\mid x_{I'}=\alpha_{I'}}$ \label{step:updateXrestoredensity}\;
\textbf{Query} all coordinates in $I'$ and let $z_{I'}$ be the query answers\;
$\setY \leftarrow \set{y\in \setY\mid  y_{(I',\alpha_{I'})}=z_{I'} }$\;
\ForEach{$i\in I'$}{   
    $\rho(i) \leftarrow z_{i}$\ \;
    $\left(\,\setE,(i^*,j^*)\,\right)\leftarrow \rowreduce(\,\setE,\ \dependent_\setY, \  y_{i,\alpha_i}=\rho(i)\,)$\tcp*[f]{Note: $i^*=i$, $j^*=\alpha_i$}\;
    \textbf{add} $(i,\alpha_i)$ to $\dependent_\setY$
     \label{step:updateRhoDueToXQuery}\;
}
$I\leftarrow I\cup I'$\;

\end{algorithm}

\subsection{Analysis of the simulation algorithm}

We first argue that $\setX$ and $\setY$ are never empty during the run of our simulation algorithm. Thus, when the algorithm reaches a leaf node of $\detp$ we can output a correct answer.
To do so we observe the following invariants on our simulation algorithm (\autoref{algo:simulationAlgo}).

\begin{lemma}[Invariants of the Simulation Algorithm (\autoref{algo:simulationAlgo})]\label{lemma:invariantsSimulationAlgo}
At the beginning of every iteration of the while loop in Algorithm~\ref{algo:simulationAlgo}, the following properties hold:
\begin{alphaenumerate}
    \item \label{prop:path} $\rho$ defines the path in the decision tree $T$ that is the outcome of the queries and $\fix(\rho)=I$.
    \item \label{prop:SetYequations} $\setY$ is the set of inputs satisfying $\setE$ which is row-reduced on $\dependent_\setY$.
    \item\label{prop:SetXAvoidsDependent} For any $x\in \setX$, and any $i\notin I$, $(i,x_i)$ is a free coordinate of $\setY$; that is $(i,x_i)\notin \dependent_\setY$.
    \item \label{prop:SetYunique} For every total assignment to the free coordinates $([N]\times [m])\setminus \dependent_\setY$, there is a unique assignment to $\dependent_\setY$ that extends it to an element of $\setY$. \item\label{prop:InvariantSetX} $\setX$ has min-entropy rate at least $\tau$ on
    $[N]\setminus I$
\end{alphaenumerate}
\end{lemma}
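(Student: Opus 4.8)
The plan is to prove \cref{lemma:invariantsSimulationAlgo} by induction on the number of iterations of the while loop, verifying that each of the five properties is preserved by every branch of the loop body: the case where Bob speaks and the queried parity is already determined on $\setY$, the case where Bob speaks and the parity is not determined, the case where Alice speaks, and the invocation of \textbf{RestoreMinEntropyRateAndQuery}. The base case is immediate from the initialization in \autoref{algo:simulationAlgo}: $\rho=*^N$ so $\fix(\rho)=\varnothing=I$; $\setE=\varnothing$ and $\dependent_\setY=\varnothing$ so $\setY=(\bits^m)^N$ is trivially row-reduced on the empty dependent set; there are no $i\in I$ so (c) holds vacuously with every coordinate free; the unique-extension property (d) is trivial when $\dependent_\setY=\varnothing$; and $\setX=[m]^N$ has min-entropy rate exactly $1\ge\tau$.

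For the inductive step I would track the invariants one at a time. Property (a) is bookkeeping: $\rho$ is only modified inside \textbf{RestoreMinEntropyRateAndQuery}, exactly when a new index $i\in I'$ is queried and added to $I$, and the setting $\rho(i)\leftarrow z_i$ keeps $\rho$ in sync with the decision-tree path and keeps $\fix(\rho)=I$. Property (b) is maintained because the only ways $\setE$ changes are through calls to $\rowreduce$, whose postcondition (stated where the helper is defined) guarantees the returned $\setE'$ is equivalent to $\setE\cup\{e\}$ and is row-reduced as witnessed by $\dependent_\setY\cup\{(i^*,j^*)\}$ — and in each case the algorithm immediately adds $(i^*,j^*)$ to $\dependent_\setY$ and restricts $\setY$ to the solutions of the new system; when Bob's parity lies in $\vecspan(\setE)$ nothing changes. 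Properties (b) and (d) together are really just the standard fact that a row-reduced system in the sense of \cref{def:row-reduced} defines an affine subspace in which the dependent coordinates are determined freely by the free ones, so (d) follows from (b); I would note that adding one new row-reduced equation with a fresh pivot column preserves this.

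Property (c) is the one requiring genuine attention, since it is the interface between the $\setX$ side and the $\setY$ side and it is what makes the disperser/consistency argument go through. Whenever a new dependent coordinate $(i^*,j^*)$ is created — in the Bob-speaks branch via step~\ref{step:AddDependentCoordinate}, or inside \textbf{RestoreMinEntropyRateAndQuery} via step~\ref{step:updateRhoDueToXQuery} — the algorithm immediately executes $\setX\leftarrow\setX\mid_{x_{i^*}\neq j^*}$ (or, in the restore routine, the new dependent coordinates are $(i,\alpha_i)$ for $i\in I'$, and $I'$ is folded into $I$ at the end of the routine, so those indices are no longer in $[N]\setminus I$ and the constraint in (c) no longer applies to them). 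So after each update the offending pointers are purged. The subtlety is that \emph{other} operations must not reintroduce a pointer into $\dependent_\setY$: restricting $\setX$ to a subset (Alice's step, the density-restoration step) can only remove elements, never add a bad pointer; and $\dependent_\setY$ only grows, so once a coordinate is dependent it stays dependent, meaning no free coordinate ever becomes dependent ``behind the back'' of $\setX$ without the accompanying purge. I would spell out that these are the only modifications to either $\setX$ or $\dependent_\setY$ in the loop body, which closes (c).

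Property (e) is maintained by design: the only steps that shrink $\setX$ without re-establishing the rate are Alice's partition step~\ref{step:chooseBiggerX} (which halves $\setX$, dropping min-entropy on any block-set by at most $1$, hence the rate check and the subsequent \textbf{RestoreMinEntropyRateAndQuery} call at the bottom of the loop restore it) and step~\ref{step:updateXAvoidDependent} (which likewise at most halves $\setX$, since a fixed pointer value $x_{i^*}=j^*$ is avoided — here I would use that on a block not yet in $I$ the min-entropy rate being $\ge 1/2$ means at most an $m^{-1/2}\le 1/2$ fraction of $\setX$ has any particular value in coordinate $i^*$, so removing them keeps $\setX$ nonempty and the deficiency increase is bounded), and again the explicit rate-check-and-restore at the end of the iteration guarantees the invariant holds at the \emph{start} of the next iteration, which is what the lemma asserts; \textbf{RestoreMinEntropyRateAndQuery} itself, by the maximality of $I'$ and the standard GPW argument, leaves $\setX$ with min-entropy rate $\ge\tau$ on $[N]\setminus I$ (with the new, larger $I$). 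The main obstacle is getting the quantitative step in (e) exactly right: one must argue that whenever step~\ref{step:updateXAvoidDependent} fires, the coordinate $i^*$ being restricted is one on which $\setX$ currently has high enough min-entropy rate that the restriction cannot empty $\setX$ — this is where the choice $\tau=1/2$ and $m\ge 4$ interact, since one needs $m^{-\tau}\le 1/2$. I would therefore prove the invariants in the order (a), (b)$\Rightarrow$(d), (c), (e), isolating the min-entropy accounting for (e) as a separate short computation, and note explicitly that the rate check at the end of the while-loop body is what bridges ``rate possibly violated mid-iteration'' to ``rate restored at the top of the next iteration.''
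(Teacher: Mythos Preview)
Your proposal is correct and follows essentially the same approach as the paper: the paper's proof says properties (a)--(d) ``easily can be seen to hold by inspection'' of the two algorithms and that (e) ``follows by the argument of GPW and follows from the maximality of the set $I'$,'' which is exactly what you have unpacked in detail. Your careful case analysis of each branch of the loop and the explicit tracking of when $\dependent_\setY$ grows versus when $\setX$ is purged is a faithful (and more thorough) expansion of what the paper leaves implicit.
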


\begin{proof}
All but the last of the conditions of
 \cref{lemma:invariantsSimulationAlgo} easily can be seen to
 hold by inspection of \cref{algo:simulationAlgo,alg:restoreX}.
 The last follows by the argument of GPW and follows from the maximality of the set $I'$ in \cref{alg:restoreX}.
It is easy to see that if $\setX$ has non-zero min-entropy rate, then $\setX$ is non-empty. 
Moreover, since every element of $\setX$ points only to free
coordinates in blocks outside of $I$ and the min-entropy on each block is large, $\setY$ must have many free coordinates outside
of $I$.
\end{proof}

We bound the number of queries $|I|$ by using a potential function equal to the deficiency of $\setX_{[N]\setminus I}$.
Let $A$ be the number of bits spoken by Alice, and $B$
be the number of bits spoken by Bob in $\detp$.
We analyze the change in $\dinfy(\setX_{[N]\setminus I})$ due to updates of $\setX$ and $I$: 

\begin{itemize}
    \item  \autoref{step:updateXAvoidDependent} in Algorithm~\ref{algo:simulationAlgo}: removing $x_{i^*}=j^*$ from $\setX$ for newly dependent  $(i^*,j^*)$:\\ 
By \cref{lemma:invariantsSimulationAlgo}(\ref{prop:InvariantSetX}), we know that $\setX$ has min-entropy rate at least $\tau=1/2$. 
    Thus, $\Pr_{x\sim \setX}[x_{i^*}=j^*] \leq 1/m^{\tau}\le 1/2$ since $m\ge 4$ and $\tau=1/2$. 
    Consequently, 
    $\Pr_{x\sim\setX}[x_{i^*}\neq j^*] \geq 1/2$. 
    So, by \cref{fact:conditioningMinEntropy}, $\mintropy(\setX \mid_{x_{i^*}\neq j^*}) \geq \mintropy(\setX) -1$. 
    Therefore, the change in $\dinfy(\setX_{[N]\setminus I})$ is at most 1. This step is executed at most
    $B$ times.  (Note: For larger $m$ we could
    maintain sharper bounds, but it seems that we don't need to do so.)   
\item \autoref{step:chooseBiggerX} in Algorithm~\ref{algo:simulationAlgo}: choosing the more frequent bit of Alice to send:\\
This increases $\dinfy(\setX_{[N]\setminus I})$ by at most 1.   This step is executed $A$ times.
\item \autoref{step:updateXrestoredensity} and \autoref{step:updateRhoDueToXQuery} in Algorithm~\ref{alg:restoreX}: querying and fixing
coordinates $I'$ maximal for min-entropy loss:\\
First, \autoref{step:updateXrestoredensity} 
increases $\dinfy(\setX_{[N]\setminus I})$ by at most $\tau \cdot |I'| \cdot \log_2 m$ as shown
the proof of Lemma 3.5 in \cite{DBLP:journals/siamcomp/GoosPW20}.  
Second, \autoref{step:updateRhoDueToXQuery} decreases
$\dinfy(\setX_{[N]\setminus I})$ by precisely $|I'|\cdot \log_2 m$ since it adds
 $|I'|$ blocks to $I$.
 The net total of these changes is that 
 $\dinfy(\setX_{[N]\setminus I})$ decreases by
 at least $(1-\tau) \cdot |I'| \cdot \log_2 m$ in 
 this case.
\end{itemize}

Putting these together yields:   
$$\dinfy(\setX_{[N]\setminus I})\le A+B - (1-\tau)\cdot |I|\cdot \log_2 m.$$
Since $\dinfy(\setX_{[N]\setminus I})\ge 0$ we must
have $$|I|\le \frac{A+B}{(1-\tau)\log_2 m}.$$
Since $\tau=1/2$, $A+B\ge 0.5 |I|\log_2 m$ and hence
$C^{*,\oplus}(R\circ \indgadget_m^N)\ge A+B\ge 0.5\  C^{dt}(R) \log_2 m$.
\qed

We can strengthen the above in the case that each of Alice's
bits, like Bob's bits, either is irrelevant or splits
$\setX$ exactly in half.  
Then, as with our simulation of Bob's bits, we choose
to follow the side with the smaller protocol subtree.
We see that the paths followed in the
protocol $\detp$ are of total length (in bits that matter to the simulation) at most the
logarithm of the size of $\detp$.

In particular, this applies if $m$ is a power of 2 so that each $x_i$ is represented by a series of bits, 
Alice's bits are 
also parities, and we replace \autoref{step:updateXAvoidDependent} by constraining
one bit of $x_{i^*}$ that isn't already constrained to be different from the corresponding bit of $j^*$.
We write $L^{\oplus,\oplus}(R)$ for the number of
leaves (i.e., the size) of a protocol $\detp$ for $R$ in which
both Alice and Bob only send parities, which we call
\emph{parity communication}.
Using this 
obtain the following:

\begin{theorem}
\label{thm:fullparitylift}
$C^{dt}(R)$ is 
$O(\log_m(L^{\oplus,\oplus}(R\circ\indgadget_m^N)))$
\end{theorem}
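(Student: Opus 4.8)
The plan is to rerun the simulation of \cref{thm:liftingForSemiStructured} with three modifications that exploit the stronger hypothesis that now \emph{both} players send only parities and that $m$ is a power of $2$, so each $x_i$ is a string of $\log_2 m$ bits. First, I would maintain $\setX$ as an \emph{affine subspace} of $\mathbb{F}_2^{[N]\times\set{0,\ldots,\log_2 m-1}}$ (it starts as the full space $[m]^N$), not merely a set of high min-entropy rate. Second, whenever the parity announced by Alice or Bob is already constant on $\setX$ (respectively lies in $\vecspan(\setE)$), follow the forced child of $\detp$ at no cost; otherwise it splits the relevant affine set into two halves of \emph{equal} size, and the simulation follows the child whose subtree of $\detp$ has the fewer leaves. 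Third, when an effective Bob step creates a new dependent coordinate $(i^*,j^*)$, if some $x\in\setX$ has $x_{i^*}=j^*$ then (necessarily $i^*\notin I$, and since the min-entropy rate is at least $\tau=\tfrac12>0$ the value $j^*$ is not taken by all of $\setX$ in block $i^*$, so $\setX$ is non-constant on that block) pick a bit $t$ of block $i^*$ on which $\setX$ is non-constant and impose the affine constraint $x_{i^*,t}=\overline{j^*_t}$; otherwise leave $\setX$ unchanged. All updates to $\setX$ (Alice's split, the Bob-induced single-bit fix, and the block-fixing in \textbf{RestoreMinEntropyRateAndQuery}) add affine constraints, so $\setX$ stays affine, and invariant (C) is restored because $x_{i^*,t}=\overline{j^*_t}$ forces $x_{i^*}\ne j^*$ everywhere. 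Correctness of the resulting decision tree $T$ for $R$ then follows essentially as in \cref{thm:liftingForSemiStructured} using the (re-verified) consistency and disperser properties of \cref{lemma:invariantsSimulationAlgo}: each leaf of $\detp$ that the simulation reaches has $\setX\times\setY$ nonempty and contained in its rectangle, with $\indgadget_m^{[N]\setminus I}(\setX,\setY)=\bits^{[N]\setminus I}$.

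To bound the height of $T$, I would track the potential $\Phi=\dinfy(\setX_{[N]\setminus I})$, which is $0$ initially and always nonnegative, and call a step \emph{effective} if it actually halves $\setX$ or $\setY$. An effective Alice step raises $\Phi$ by exactly $1$: the announced parity must involve a coordinate outside $I$ (else it is constant on $\setX$), so following the smaller subtree keeps exactly half of $\setX_{[N]\setminus I}$. An effective Bob step halves $\setY$, and the accompanying single-bit fix of $\setX$ raises $\Phi$ by at most $1$ (exactly $1$ when a fix is performed, since it fixes a non-constant coordinate of the affine space $\setX$ in a block outside $I$). Each call to \textbf{RestoreMinEntropyRateAndQuery} produces a net \emph{decrease} of at least $(1-\tau)\,|I'|\log_2 m=\tfrac12|I'|\log_2 m$, by exactly the min-entropy bookkeeping in the proof of \cref{thm:liftingForSemiStructured}, which is untouched by the switch to affine $\setX$. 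The one genuinely new ingredient is the counting: because the simulation always descends to the child with fewer leaves, every effective step at least halves the leaf-count of the current subtree of $\detp$, so any run contains at most $\log_2 L^{\oplus,\oplus}(R\circ\indgadget_m^N)$ effective steps and $\Phi$ therefore rises by at most $\log_2 L^{\oplus,\oplus}(R\circ\indgadget_m^N)$ over the whole run. Hence $0\le\Phi\le\log_2 L^{\oplus,\oplus}(R\circ\indgadget_m^N)-\tfrac12|I|\log_2 m$, so $|I|\le 2\log_m L^{\oplus,\oplus}(R\circ\indgadget_m^N)$; since the height of $T$ is the maximum of $|I|$ over all runs, $C^{dt}(R)=O(\log_m L^{\oplus,\oplus}(R\circ\indgadget_m^N))$.

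The step I expect to be the real obstacle is the third modification: showing that a single affine constraint always restores invariant (C) after Bob's move while cutting $\setX$ exactly in half, and in particular that the relevant non-constant bit of block $i^*$ exists. The key point is that the min-entropy-rate hypothesis rules out $\Pr_{x\sim\setX}[x_{i^*}=j^*]=1$ whenever $i^*\notin I$, so if any $x\in\setX$ points to the new dependent coordinate there is also some $x'\in\setX$ with $x'_{i^*}\ne j^*$; a bit $t$ on which $x$ and $x'$ disagree in block $i^*$ is non-constant on $\setX$, and imposing $x_{i^*,t}=\overline{j^*_t}$ keeps $x'$, forces $x_{i^*}\ne j^*$ everywhere, and (since $\setX$ is affine) removes exactly half of $\setX$. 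One must also dispatch the trivial boundary case $i^*\in I$ --- then $(i^*,\alpha_{i^*})\in\dependent_\setY$ already and $x_{i^*}=\alpha_{i^*}\ne j^*$ on $\setX$, so nothing is needed --- and check that requiring Alice's messages to be parities does not affect any step of the invariant proof of \cref{lemma:invariantsSimulationAlgo}. Everything else is a direct re-run of the proof of \cref{thm:liftingForSemiStructured} with ``number of communicated bits $A+B$'' replaced throughout by ``number of effective steps, which is at most $\log_2 L^{\oplus,\oplus}(R\circ\indgadget_m^N)$''.
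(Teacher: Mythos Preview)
Your proposal is correct and follows essentially the same approach as the paper. The paper's proof is only a brief sketch preceding the theorem statement: it observes that when $m$ is a power of $2$ and Alice also sends parities, each effective bit splits $\setX$ exactly in half, so one can follow the smaller subtree on Alice's side too, and it replaces \autoref{step:updateXAvoidDependent} by ``constraining one bit of $x_{i^*}$ that isn't already constrained to be different from the corresponding bit of $j^*$''---precisely your single-bit fix. Your writeup simply fleshes out the details (affinity of $\setX$, the existence of the non-constant bit via the min-entropy-rate invariant, the boundary case $i^*\in I$, and the leaf-count halving argument) that the paper leaves implicit.
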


\subsection{Parity decision trees and \texorpdfstring{Res$(\oplus)$}{Res(+)} proofs}





We can use the same ideas with small modifications to give a generic method for producing lower bounds for parity decision
trees from those for ordinary decision trees.

\begin{theorem}
\label{thm:pdt-lift}
For any sufficiently large $m$ that is a power of 2 and any function $f:\{0,1\}^N\rightarrow \set{0,1}$,
$\height^{dt}_\oplus(f\circ\indgadget_m^N)\ge\height^{dt}(f)$
and $\size^{dt}_\oplus(f\circ\indgadget_m^N)\ge 2^{\height^{dt}(f)}\ge \size^{dt}(f)$.
\end{theorem}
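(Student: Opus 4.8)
The plan is to adapt the semi-structured simulation of \cref{thm:liftingForSemiStructured} (and its parity-communication strengthening, \cref{thm:fullparitylift}) to the single-player setting, where a parity decision tree over the $x$- and $y$-variables of $f\circ\indgadget_m^N$ plays the role of the protocol $\detp$. First I would observe that a parity decision tree $T'$ for $f\circ\indgadget_m^N$ queries parities of the combined variable set $\{x_{i,j'}\}\cup\{y_{i,j}\}$; since we only need a \emph{height} (resp. \emph{size}) bound, we may simulate $T'$ exactly as the protocol is simulated in \cref{algo:simulationAlgo}, treating each parity query as the analogue of a bit of communication. The key structural point is that a parity over the joint variables can be split as $\bigoplus_{(i,j)\in S_y} y_{i,j}\ \oplus\ \bigoplus_{(i,j')\in S_x} x_{i,j'}$; we maintain $\setY$ as an affine subspace in row-reduced form (as in Invariants (A)--(D)) and $\setX$ as a product set over blocks, but now $\setX$ too will be restricted by affine constraints on its bits, exactly as in the $L^{\oplus,\oplus}$ strengthening preceding \cref{thm:fullparitylift}. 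Because $m$ is a power of $2$, each $x_i$ is a bit-vector and the $x$-side constraints are just parities of those bits, so the same row-reduction bookkeeping applies on both sides.

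The second step is to handle a joint parity query $\bigoplus_{S_y}y \oplus \bigoplus_{S_x}x$. If $\bigoplus_{S_x}x$ is already fixed on $\setX$, the query reduces to a pure $y$-parity and we proceed as Bob's move in \cref{algo:simulationAlgo}: either it is already spanned (free branch, no cost) or we pass to the half of $\setY$ on the smaller subtree and row-reduce, adding one dependent $y$-coordinate and removing the corresponding pointers from $\setX$. If $\bigoplus_{S_x}x$ is not fixed on $\setX$, we pass to the half of $\setX$ on the smaller subtree (as in the $L^{\oplus,\oplus}$ argument), which costs at most one unit of deficiency in $\dinfy(\setX_{[N]\setminus I})$; then we restore the min-entropy rate to $\tau$ via \textbf{RestoreMinEntropyRateAndQuery}, which introduces queries to $T$. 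The same potential-function accounting as in the proof of \cref{thm:liftingForSemiStructured} then gives $\height^{dt}(f)\le |I|\le (A+B)/((1-\tau)\log_2 m)$ for the height version, and following the smaller subtree at every branching node (both $x$- and $y$-side) gives that the total number of "relevant" queries along any root-to-leaf path of $T'$ is at most $\log_2 \size^{dt}_\oplus(f\circ\indgadget_m^N)$, which by the same inequality is at least $(1-\tau)\log_2 m$ times the number of queries the simulation makes to $z$; since the simulation produces a decision tree for $f$ of height at least $\height^{dt}(f)$, rearranging with $\tau=1/2$ and "sufficiently large $m$" (so that $(1-\tau)\log_2 m\ge 1$) yields $\size^{dt}_\oplus(f\circ\indgadget_m^N)\ge 2^{\height^{dt}(f)}$, and $2^{\height^{dt}(f)}\ge \size^{dt}(f)$ is the standard inequality between decision-tree height and size.

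One subtlety I would be careful about: in the joint simulation we need $\setX$ to remain a product set over the \emph{unfixed} blocks (so that the GPW min-entropy-restoration and the pointer-removal of Invariant (C) go through), yet a joint parity query can couple bits of $x_i$ in block $i$ with bits of $x_{i'}$ in block $i'$. The resolution is exactly the one already used in the text before \cref{thm:fullparitylift}: when $\bigoplus_{S_x}x$ is unfixed we \emph{do not} intersect $\setX$ with the affine hyperplane directly; instead we observe that restricting to one side of an unfixed parity removes at most a factor of $2$ from $|\setX|$, hence at most $1$ from $\dinfy(\setX_{[N]\setminus I})$, and then immediately call \textbf{RestoreMinEntropyRateAndQuery}, which re-establishes a product structure on the surviving blocks by fixing a maximal low-entropy set $I'$. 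Since this also needs the pointers from $\setX$ into $\setY$ to avoid dependent coordinates, we must (as in \cref{thm:liftingForSemiStructured}) update $\setX$ whenever a new dependent $y$-coordinate appears; the $m\ge 4$ (indeed "sufficiently large $m$") hypothesis ensures $\Pr[x_{i^*}=j^*]\le m^{-\tau}\le 1/2$ so this costs at most $1$ per step.

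The step I expect to be the main obstacle is making the size/leaf-counting argument fully rigorous in the joint setting: one must verify that choosing the smaller subtree at \emph{every} node where the simulation actually branches (free-parity nodes contribute nothing) bounds the number of relevant queries by $\log_2(\text{number of leaves of }T')$, and simultaneously that each such query decreases the potential $\dinfy(\setX_{[N]\setminus I})+(1-\tau)|I|\log_2 m - (A+B)$ correctly — in particular that the interleaving of $x$-side restrictions, $y$-side row-reductions, and min-entropy restorations never causes the potential to behave badly. This is essentially the bookkeeping already carried out for \cref{thm:liftingForSemiStructured,thm:fullparitylift}, so I expect it to go through with only notational changes, but it is where the care is needed.
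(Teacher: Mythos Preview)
Your overall strategy---adapt the semi-structured simulation to a single-player parity decision tree---is exactly what the paper does, and your potential-function accounting, smaller-subtree trick for the size bound, and final inequality $2^{\height^{dt}(f)}\ge\size^{dt}(f)$ are all correct. However, there is a genuine gap in your treatment of a \emph{joint} parity query $p_x\oplus p_y$ with $p_x=\bigoplus_{S_x}x$ and $p_y=\bigoplus_{S_y}y$.

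In the case where $p_x$ is \emph{not} fixed on $\setX$, you write ``pass to the half of $\setX$ on the smaller subtree''. But the branch taken at that node is determined by the value of $p_x\oplus p_y$, not by $p_x$ alone. If $p_y$ is also unfixed on $\setY$, then for every $x\in\setX$ both children are reachable (by varying $y$), so there is no ``half of $\setX$'' corresponding to either subtree. Fixing only $p_x$ to some value does not determine the branch, and fixing the whole joint parity $p_x\oplus p_y=b$ does not cut out a rectangle $\setX'\times\setY'$; it couples $x$ and $y$. Consequently your invariants (C),(D), which presuppose a product structure $\setX\times\setY$, cannot be maintained as stated, and the density-restoration/min-entropy machinery on $\setX$ alone is no longer justified.

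The paper resolves this not by separating the parity into an $x$-part and a $y$-part, but by \emph{dropping} the requirement that $\setY$ be an affine subspace independent of $x$. It keeps $\setX$ affine and keeps the set $\dependent_\setY$ of dependent $y$-coordinates, but now each dependent coordinate is expressed (via row-reduction) as an affine function of the free $y$-coordinates \emph{and} the $x$-bits; in the paper's words, ``$\setY$ itself is not maintained''. A new joint parity is first row-reduced to eliminate all coordinates in $\dependent_\setY$; if some free $y$-coordinate $(i^*,j^*)$ survives, it becomes the new dependent coordinate (its defining equation may involve $x$), and one then restricts $\setX$ so $x_{i^*}\ne j^*$ as in Bob's step. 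If no $y$-coordinate survives, the reduced equation is a pure $x$-parity and is added to the defining equations of $\setX$ as in Alice's step. Either way exactly one affine constraint lands on $\setX$, so the deficiency of $\setX$ grows by~$1$ per non-implied query and the same potential argument goes through. Your case split on whether $p_x$ is fixed is the wrong dichotomy; the correct one is whether, after row-reduction, a free $y$-coordinate remains.
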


\begin{proof}
We follow the ideas of the proof of \cref{thm:fullparitylift} with a small modification that
combines the steps for Bob and for Alice as follows:
We maintain $\setX$ as an affine subspace as before and
we maintain a set $\dependent_\setY\subset [N]\times [m]$ of coordinates as before (though $\setY$ itself is not maintained), but now the equations that we maintain involving these coordinates
depend on the bits of the $x_i$ also.
At each parity that is not already implied, we row-reduce to remove the variables in $\dependent_\setY$.

If a coordinate $(i^*,j^*)$ in $\setY$ remains, we choose it as
the new dependent coordinate and complete the row-reduction.   We then apply the portion of the
simulation designated as Bob's simulation in order to
ensure that $x_{i^*}$ does not point to $j^*$.
If no such coordinate remains, then this becomes a
constraint on $\setX$ and we apply the portion associated with Alice.

In either case, we add one defining equation for $\setX$ for each bit that is sent that is not already
implied, so the deficiency of $\setX$ is precisely this
number.  
The same analysis shows that
$\height^{dt}(f)$ is 
$O(\log_m(\size^{dt}_\oplus(f\circ\indgadget_m^N)))$
and hence $O(\height^{dt}_\oplus(f\circ\indgadget_m^N)/\log m)$.
By choosing $m$ a sufficiently large constant, we obtain
that $\height^{dt}(f)\le \log_2(\size^{dt}_\oplus(f\circ\indgadget_m^N))$ and hence
$\size^{dt}(f)\le 2^{\height^{dt}(f)}\le \size^{dt}_\oplus(f\circ\indgadget_m^N)$ and
$\height^{dt}(f)\le \height^{dt}_\oplus(f\circ\indgadget_m^N)$.
\end{proof}

We can use this (in the form for relations, which we could easily have stated above) to show that we can convert each $k$-CNF $\varphi$ to an $O(k)$-CNF $\varphi\circ \indgadget_m^N$ that requires tree-like $Res(\oplus)$ refutations for $\varphi\circ \indgadget_m^N$ that are at least linear in the size of tree-like resolution refutations for $\varphi$.  

\begin{corollary}
For any sufficiently large integer $\ell$, $m=2^\ell$,
and any unsatisfiable $k$-CNF formula with $M$ clauses on $N$ Boolean variables, 
$\varphi\circ \indgadget_m^N$ is a $k(\ell+1)$-CNF formula with $M'=m^k M$ clauses on
$N'=N(\ell+m)$ variables that requires
$\treesize_{Res(\oplus)}(\varphi\circ\indgadget_m^N)\ge\treesize_{Res}(\varphi)\ge 2^{\width_{Res}(\varphi)-k}$.
\end{corollary}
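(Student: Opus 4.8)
The plan is to combine \cref{thm:pdt-lift} (in its form for relations/search problems) with the established correspondences between tree-like refutations and decision trees. First I would note that, by \cref{prop:dpll}, the minimum-height (resp.\ minimum-size) decision tree for $\search_\varphi$ has height $\height^{dt}(\search_\varphi)$ and size $\size^{dt}(\search_\varphi)$, and by \cref{prop:reslin2} the minimum tree-like $Res(\oplus)$ refutation of $\varphi\circ\indgadget_m^N$ has size exactly $\size^{\oplus(dt)}(\search_{\varphi\circ\indgadget_m^N})$. So $\treesize_{Res(\oplus)}(\varphi\circ\indgadget_m^N) = \size^{dt}_\oplus(\search_{\varphi\circ\indgadget_m^N})$.

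Next I would identify the search problem $\search_{\varphi\circ\indgadget_m^N}$ with $\search_\varphi\circ\indgadget_m^N$. This is exactly the content of the final paragraph of the ``Lifting CNF formulas with $\indgadget_m$'' subsection: each clause of $\varphi\circ\indgadget_m^N$ corresponds to a unique clause of $\varphi$, and a vector falsifying a clause of $\varphi\circ\indgadget_m^N$ yields, under the $z$-values $z_i = \indgadget_m(x_i,y_i)$, a vector falsifying the corresponding clause of $\varphi$. Hence solving $\search_{\varphi\circ\indgadget_m^N}$ on $(x,y)$ is the same as solving $\search_\varphi$ on $\indgadget_m^N(x,y)$, up to the bijection on clause names, so the relational form of \cref{thm:pdt-lift} applies with $f$ replaced by the relation $\search_\varphi$: $\size^{dt}_\oplus(\search_\varphi\circ\indgadget_m^N)\ge 2^{\height^{dt}(\search_\varphi)}\ge \size^{dt}(\search_\varphi)$. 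Chaining these equalities gives $\treesize_{Res(\oplus)}(\varphi\circ\indgadget_m^N)\ge\treesize_{Res}(\varphi)$, and the last inequality $\treesize_{Res}(\varphi)\ge 2^{\width_{Res}(\varphi)-k}$ is just \cref{prop:bsw}.

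Finally I would verify the bookkeeping on the formula parameters. Each of the $N$ variables $z_i$ is replaced by $\ell$ bits $x_{i,j'}$ and $m$ bits $y_{i,j}$, giving $N' = N(\ell+m)$ Boolean variables. Each clause of $\varphi$ of width at most $k$ is replaced by $m^k$ clauses each of width $(\ell+1)k$, so $\varphi\circ\indgadget_m^N$ is a $k(\ell+1)$-CNF with $M' = m^k M$ clauses, matching the statement; one should check the ``sufficiently large $\ell$'' hypothesis of \cref{thm:pdt-lift} is what licenses $m=2^\ell$ being ``sufficiently large.''

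The only genuine content here is already done in \cref{thm:pdt-lift}; the main thing to get right is the identification $\search_{\varphi\circ\indgadget_m^N} \equiv \search_\varphi\circ\indgadget_m^N$, i.e.\ checking that the bijection on clause names induced by the lifting construction is compatible with the gadget composition, so that a parity decision tree solving the lifted search problem is literally an instance of the object that \cref{thm:pdt-lift} lower-bounds. This is routine given the explicit construction already spelled out in the preliminaries, so I expect no real obstacle — the corollary is essentially a matter of threading \cref{prop:dpll}, \cref{prop:reslin2}, \cref{thm:pdt-lift}, and \cref{prop:bsw} together in the right order.
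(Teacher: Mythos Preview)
Your proposal is correct and follows essentially the same route as the paper: use \cref{prop:reslin2} to pass from tree-$Res(\oplus)$ size to parity decision tree size for $\search_{\varphi\circ\indgadget_m^N}$, identify this search problem with $\search_\varphi\circ\indgadget_m^N$ via the clause correspondence, apply the relational form of \cref{thm:pdt-lift}, and finish with \cref{prop:dpll} and \cref{prop:bsw}. One minor imprecision: the clause correspondence is a surjection (each clause of $\varphi$ has $m^k$ lifted clauses), not a bijection, but only the direction you actually use---that a falsified lifted clause names a unique falsified base clause---is needed, so the argument goes through.
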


\begin{proof}
By \cref{prop:reslin2}, for any minimal tree-$Res(\oplus)$ refutation of
$\varphi\circ \indgadget_m^N$ there is an isomorphic parity decision tree solving the search problem
$\search_{\varphi\circ \indgadget_m^N}$.
Using the form of \cref{thm:pdt-lift} for relations we can
convert a parity communication protocol for $\search_{\varphi\circ \indgadget_m^N}$ into one of at most the same size that solves the
search problem $\search_\varphi$, since each correct output of $\search_{\varphi\circ \indgadget_m^N}$ yields the name of a violated clause
of $\varphi\circ \indgadget_m^N$ that corresponds to a unique clause of
$\varphi$ and can be output by the ordinary decision tree.   
The final result follows using the equivalence of decision trees 
for $\search_\varphi$ and tree-resolution refutations of $\varphi$ from \cref{prop:dpll} and the tree-size/width 
relationship from \cref{prop:bsw}.
\end{proof}

We note that Itsykson and Kojevnikov~\cite{DBLP:conf/mfcs/ItsyksonS14,DBLP:journals/apal/ItsyksonS20}
previously used a much more specialized lifting theorem from~\cite{bps:kfoldtseitin-journal} for the specific
case of Tseitin formulas to give tree-like $Res(\oplus)$ lower bounds.  Our
new simple method is much more general and yields a large class of hard formulas.
We also note that Huynh and Nordstr\"{o}m~\cite{DBLP:conf/stoc/HuynhN12} gave lifting theorems for a variety of other proof systems using constant-size index gadgets (indeed with $m=3$) but these only yield good bounds for a restricted class of formulas whose search
problems have high ``critical block sensitivity".

\section{Summary and future directions}

Our results show that the Index (or Inner-Product) gadget is not a good disperser for low min-entropy deficiency rectangles $\setX \times \setY$ when $m$ is much smaller than $\log N$. 
Thus to reduce the gadget size beyond logarithmic using current techniques we need to consider other properties on the rectangle $\setX \times \setY$ maintained during the simulation  that can ensure that Index is a good disperser. 
Our counterexample to the conjecture of Lovett et al.~\cite{DBLP:conf/innovations/LovettMMPZ22} suggests the following natural property for $\setX$, $\setY$ in addition to having low entropy deficiency:
\begin{itemize}
    \item 
Except for a small subset of blocks $J$ of size $O(\Delta)$, every block of every $y$ in $\setY$ is “almost” balanced in terms of the number of zeroes and ones. That is, for any $i \in [N]\setminus J$ and for any $y \in \setY$, $ |y^i|_1 \approx m-|y^i|_1$, where $|y^i|_1$ denotes the number of ones in the $i$-th block of  $y$. 
\end{itemize}
Note that our counterexample is avoided by $\setY$ satisfying this property; 
indeed this property is violated in the extreme by our counterexample as every $y$ in the set $\setY$ we construct has $\omega(\Delta)$ blocks that are maximally unbalanced.  

The standard simulation paradigm allows considerable flexibility in choosing which subset to focus on in the rectangle of inputs associated with each node
of the communication protocol.   Maintaining something like the property above is easy to do and it is plausible that this or related properties will indeed be sufficient to yield general lifting theorems with very small Index gadgets.

Although our counterexample ruled out improving the gadget size to constant for Index in deterministic lifting theorems using current techniques, we were able to prove a lifting theorem  with constant-sized gadgets for the restricted class of protocols where Bob is restricted to sending parities. A natural extension of this direction is to consider semi-structured protocols where Bob is restricted to sending other interesting functions of his input bits. 
A natural class of restricted functions are threshold functions. 
The ideas used in our lifting theorem do not work in this case. 
As illustrated by the application of our semi-structured lifting theorem to Res$(\oplus)$ lower bounds, such restricted lifting theorems may have immediate applications in proof complexity. 
They may also be a natural avenue for developing new tools and techniques that could potentially help in proving general lifting theorems with constant sized gadgets.

\appendix

\bibliographystyle{plain}
\bibliography{references}

\end{document}